\definecolor{c1}{RGB}{92,89,191}
\definecolor{c2}{RGB}{191,92,98}
\definecolor{c3}{RGB}{98,191,92}
\newtheorem{lemma}{Lemma}
\newtheorem{remark}{Remark}
\newtheorem{theorem}{Theorem}
\newtheorem{definition}{Definition}
\newtheorem{assumption}{Assumption}
\author{Kshitij Bhatta \IEEEmembership{Student Member, IEEE}, Majeed Hayat \IEEEmembership{Fellow, IEEE} and Francesco Sorrentino \IEEEmembership{Senior Member, IEEE} %
    \thanks{This work was supported by the National Science Foundation through Grant No. 1727948 and Grant CRISP-1541148.}
    \thanks{K. Bhatta, is with the Department of Mechanical Engineering, Univeristy of New Mexico, Albuquerque, NM 87131 USA}
    \thanks{M. Hayat, is with the Department of Electrical and Computer Engineering, Marquette University, Milwakee, WI 53233 USA}
    \thanks{F. Sorrentino is with the Department of Mechanical Engineering, University of New Mexico, Albuquerquem NM 87131 USA (e-mail: fsorrent@unm.edu).}}
\title{Modal Decomposition of the Linear Swing Equation \\ in Networks with Symmetries}
\date{August 2020}
\begin{document}
\maketitle

\begin{abstract}
 Symmetries are widespread in physical, technological, biological, and social systems and networks, including power grids. The swing equation is a classic model for the dynamics of powergrid networks. The main goal of this paper is to explain how network symmetries affect the swing equation transient and steady state dynamics. We introduce a modal decomposition that allows us to study transient effects, such as the presence of overshoots in the system response. This modal decomposition provides insight into the peak flows within the network lines and allows a rigorous characterization of the effects of symmetries in the network topology on the dynamics. Our work applies to both cases of homogeneous and heterogeneous parameters. Further, the model is used to show how small perturbations propagate in networks with symmetries.  Finally, we present an application of our approach to a large power grid network that displays symmetries. 
\end{abstract}

\section{Introduction}

Many papers have investigated network models which describe the dynamics of power grids  \cite{motter2002cascade,crucitti2004model,crucitti2004topological,van2007voltage,rohden2016cascading,rosas2007topological,albert2004structural,holmgren2006using,sole2008robustness,chassin2005evaluating,dwivedi2010analyzing,dwivedi2009identifying,guohua2008vulnerability,pahwa2010topological,pahwa2014abruptness,Syncagrearan}. 
Simplified models for the propagation of cascading failures on networks have been proposed in \cite{motter2002cascade,crucitti2004model,crucitti2004topological}. More realistic models for the propagation of cascading failures are based on the swing equation \cite{rohden2016cascading} or DC power flows \cite{carreras2002critical,pahwa2014abruptness}. Symmetries play a significant role in the study of networked systems. References \cite{okuda1991mutual, golubitsky1999symmetry,NC,blaha2019cluster,della2020symmetries,ACS,nicosia2013remote, whalen2015observability,cho2017stable,barrett2017equitable,sorrentino2019symmetries,S_nchez_Garc_a_2020}   have proposed tools based on graph theory and group theory to analyze the dynamics of complex networks with symmetries. The presence of symmetries in power grid networks has been documented in \cite{S_nchez_Garc_a_2020,della2020symmetries}. {Reference \cite{AranyaGTA} has analyzed how
network symmetries may affect synchronization modes of power
grids and suggested that symmetries may enhance the emergence of complete
synchronization.}


\color{black}

{Despite this previous work, it appears that the effects of the network symmetries on the dynamics of the swing equation have not been fully elucidated. In what follows we first provide a definition for network symmetries in the context of the swing equation, then we show how a reduced representation of the dynamics based on the so-called `quotient network' can be achieved both in the cases of homogeneous and heterogeneous parameters. Only a subset of the modes of the original network are \emph{inherited} by the quotient network. Neglecting the remaining modes leads to an approximation, which nonetheless can be quantified by considering an appropriately defined error dynamics and applying a modal analysis to the error dynamics.}

An important application of the study of network dynamics is network design. Knowledge of how the dynamics changes in response to structural and dynamical perturbations is key for designing resilient complex systems. A complex network approach to study vulnerabilities of power grids has been presented in \cite{arianos2009power}. { Further, several articles have studied the existence of vulnerabilities inherent to the power network structure \cite{Structuralvunera,ChenProbab,VulnerabilityBald,BistableZheng,BernsteinSensitivity,LargenetworkCascadesNishikawa,EmergentfailuresNesti}} and the importance of the transient dynamics in the propagation of failures has been emphasized in \cite{DIcascade}.

Here we propose a mathematical analysis of the classic swing equation based on a simplified description in terms of a network of coupled forced second order systems. Our analysis provides immediate understanding of the swing equation transient dynamics via a modal decomposition.  Further, it allows us to rigorously address the presence of network symmetries and their effects on the dynamics.

{The rest of this paper is organized as follows. Section II introduces the swing equation for a generic network and presents its modal decomposition. The effects of network symmetries are discussed in Sec.\ III. In the presence of symmetries, a lower dimensional dynamical representation based on the so-called quotient network is possible, either in the cases that the powers at the network nodes \textit{respect} the symmetries or not. In the latter case, the quotient description only provides an approximation of the full transient response. The deviation between the full network response and the quotient network response can be characterized in terms of a properly defined error dynamics. 
Effects of heterogeneity in the damping terms are presented in Sec.\ IV. Finally, the conclusions are given in Sec.\ V.}

\section{The Swing Equation and its Modal Decomposition}
The swing equation is a classic model for the dynamics of power grid networks. A power grid can be represented using an undirected graph such that each node can be either a generator (generating power) or a load (consuming power). Each node is regarded as a rotating machine (oscillator) and the presence of an edge between two nodes corresponds to the presence of a transmission line connecting them. The network connectivity is given by the symmetric matrix $\tilde{A}=\{ \tilde{A}_{ij} \}$, $\tilde{A}_{ij}=\tilde{A}_{ji} > 0$ if nodes $i$ and $j$ are connected, $\tilde{A}_{ij}=\tilde{A}_{ji}= 0$ otherwise.

      \begin{definition}
      {An undirected network} is defined by the set of nodes $\mathcal V=(1,2,...,n)$, $|\mathcal{V} |=n$ and the set of edges or lines $\mathcal{E}=\mathcal{V}\times \mathcal{V}, $ such that $(i,j) \in \mathcal{E}$,  $i \in \mathcal{V}$, $j \in \mathcal{V}$, if $\tilde{A}_{ij}=\tilde{A}_{ji}>0$.
      \end{definition}
      The state of each  oscillator $i=1,...,n$ is characterized by its node displacement $\theta_i$ and the nodal velocity $\omega_i = \dot{\theta_i}$ relative to a reference frequency $\omega_R$. $H_i$ is the inertia constant of oscillator $i$, $D_i$ is its damping constant and $b_i$ is the power generated (consumed) by the node. If we assume that $\omega_R \approx \omega$ and lossless transmission lines, we can write the swing equation \cite{powersync_latora, huang2019small},
      
      \begin{equation} \label{Aij}
   \frac{2H_i}{\omega_R}\ddot{\vartheta_i}+\frac{D_i}{\omega_R}\dot{\vartheta_i}= {b_i}-\sum_{{j=1,j\neq i}}^n {\tilde{A}_{ij}} \sin(\vartheta_i - \vartheta_j), \hspace{3mm} i=1,2,...n, 
 \end{equation}
$b_i>0$ for generators and $b_i<0$ for loads. Dividing both sides of Eq. \ref{Aij} by $\frac{2H_i}{\omega_R}$, we get
    \begin{equation}\label{interm}
      \ddot{\vartheta_i}+\frac{D_i}{2H_i}\dot{\vartheta_i}= \frac{\omega_R}{2H_i}{b_i}-\frac{\omega_R}{2H_i}\sum_{{j=1,j\neq i}}^n {\tilde{A}_{ij}} \sin(\vartheta_i - \vartheta_j), 
    \end{equation}
    {$i=1,2,...n.$} 
 
If we define, $\frac{D_i}{2H_i}=\gamma_i$, $J_i=\frac{\omega_R}{2H_i}$, ${p}_i=J_i b_i$, ${A}_{ij}=J_i \tilde{A}_{ij}$, Eq.\ \eqref{interm} becomes\\
 \begin{equation} \label{two}
       \ddot{\vartheta_i}=-\gamma_i\dot{\vartheta_i}+ p_i+  \sum_{{j=1,j\neq i}}^n {A}_{ij} \sin(\vartheta_j - \vartheta_i),  
    \end{equation}
$i=1,2,...n$. The flow in an edge/line $(i, j)$, with coupling ${A}_{ij}$ at time $t$ is given by \cite{DIcascade}:
\begin{equation}
    F_{ij}(t)={A}_{ij} \sin(\vartheta_j(t)-\vartheta_i(t)). 
    \label{flow}
\end{equation}

 \begin{definition}
 We say that the network is balanced if  $\sum_{i=1}^n b_i=0$. 
 \end{definition}
 Unless differently noted, we proceed under the assumption that the network is balanced.

In any large network,  including power-grid systems, there are redundancies in the form of symmetries \cite{okuda1991mutual, golubitsky1999symmetry,NC,della2020symmetries,ACS,nicosia2013remote, whalen2015observability,cho2017stable,barrett2017equitable}. 
\begin{definition}

A symmetry for the set of Eqs.\ \eqref{two} is a permutation matrix $P$ such that $P \pmb{\gamma}=\pmb{\gamma}$ 
and $P {A}={A} P$. 
 The automorphism group $\mathcal{G}$ is the set of all symmetries with the operation composition. The set of all symmetries in the group will only permute certain subsets of nodes (the \emph{orbits} or \emph{clusters}) among each other. The set of nodes $\mathcal{V}$ is partitioned
into $q$ disjoint subsets of nodes $\lbrace S_1, S_2,... S_q\rbrace$,  $\cup_{i=1}^q S_i=\mathcal{V}$, $S_i \cap S_j =\emptyset $ for $i \neq j$, with $n= n_i \sum_{i=1}^q n_i$ where $n_i=|S_i|$. 

The nodes in each subset are mapped into each other by application of one or more symmetries in $\mathcal{G}$; however, there is no symmetry in $\mathcal{G}$ that will map into each
other nodes in different subsets. We refer to
such subsets of nodes as `clusters’ or `orbits' of the automorphism group.
For a review of graph automorphisms, see \cite{S_nchez_Garc_a_2020}.

\begin{remark}
Consider  a permutation $P$ of the network nodes that satisfies $AP=PA$. Say $v,w \in \mathcal{V}$ two network nodes, call $v'$ $(w') \in \mathcal{V}$ the network node $v$ $(w)$ gets mapped to by application of the permutation $P$. It follows that $A_{vw}=A_{v'w'}$ and $A_{wv}=A_{w'v'}$.
\end{remark}
\end{definition}

\begin{lemma}
 A flow-invariant `synchronous' solution $\vartheta_i^*(t)=\vartheta^k(t)$ for all $i \in S_k$, $k=1,...,q$, is induced by the automorphism group $\mathcal{G}$. 
\end{lemma}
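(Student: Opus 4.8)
The plan is to exhibit a subspace of phase space on which the node variables are constant within each orbit of $\mathcal{G}$, show that this subspace is invariant under the flow generated by Eqs.~\eqref{two}, and then invoke uniqueness of solutions to produce the synchronous trajectory $\vartheta^k(t)$. First I would rewrite \eqref{two} as a first-order system in $(\pmb\vartheta,\pmb\omega)$: $\dot{\pmb\vartheta}=\pmb\omega$, $\dot{\pmb\omega}=-\Gamma\pmb\omega+\pmb p+f(\pmb\vartheta)$, where $\Gamma=\mathrm{diag}(\pmb\gamma)$, $\pmb p=(p_1,\dots,p_n)^\top$, and $f(\pmb\vartheta)$ has components $f_i(\pmb\vartheta)=\sum_{j\neq i}A_{ij}\sin(\vartheta_j-\vartheta_i)$. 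The right-hand side is smooth, so solutions exist and are unique for each initial condition.

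Next I would show that each $P\in\mathcal G$, acting diagonally as $\hat P=\mathrm{diag}(P,P)$, commutes with this vector field. The damping term commutes because $P\pmb\gamma=\pmb\gamma$ is equivalent to $P\Gamma=\Gamma P$; the constant term is unchanged because the powers are compatible with the symmetry, $P\pmb p=\pmb p$ (this should be flagged as a standing hypothesis here, since the symmetry-breaking case is precisely what motivates the error/quotient analysis in the sequel); and the coupling term satisfies the equivariance $Pf(\pmb\vartheta)=f(P\pmb\vartheta)$. The last identity is the only real computation: using the Remark (equivalently $PA=AP$, so $A_{vw}=A_{v'w'}$ under the permutation), one relabels the summation index by the permutation associated with $P$ and uses that the sine coupling depends only on node differences. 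Equivariance implies that each fixed-point set $\mathrm{Fix}(\hat P)=\{(\pmb\vartheta,\pmb\omega):\hat P(\pmb\vartheta,\pmb\omega)=(\pmb\vartheta,\pmb\omega)\}$ is flow-invariant, and since an intersection of invariant sets is invariant, $\mathrm{Fix}(\mathcal G)=\bigcap_{P\in\mathcal G}\mathrm{Fix}(\hat P)$ is flow-invariant.

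Then I would identify $\mathrm{Fix}(\mathcal G)$ with the synchronous subspace claimed in the statement: a vector is fixed by every element of $\mathcal G$ exactly when its entries coincide on each orbit, so $\mathrm{Fix}(\mathcal G)=\{(\pmb\vartheta,\pmb\omega):\vartheta_i=\vartheta^k,\ \omega_i=\omega^k\ \text{for all }i\in S_k,\ k=1,\dots,q\}$ by the definition of the clusters $S_1,\dots,S_q$ as the orbits of the automorphism group. Picking any initial condition in $\mathrm{Fix}(\mathcal G)$ and appealing to existence--uniqueness, the solution stays in $\mathrm{Fix}(\mathcal G)$ for all $t$; its common value on $S_k$ is the function $\vartheta^k(t)$. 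I would close by remarking that restricting \eqref{two} to this subspace collapses it to a self-contained system of $q$ forced second-order equations, which is the quotient-network dynamics used later.

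I expect the only nontrivial point to be the equivariance identity $Pf(\pmb\vartheta)=f(P\pmb\vartheta)$ together with keeping the bookkeeping between the permutation matrix $P$ and its action on node indices fully precise; since the Remark already records $A_{vw}=A_{v'w'}$, this reduces to a change of summation variable. The secondary point requiring care is exactly the role of $\pmb p$: without $P\pmb p=\pmb p$ the subspace $\mathrm{Fix}(\mathcal G)$ is generally not flow-invariant, so the statement must be read under that compatibility assumption.
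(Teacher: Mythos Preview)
Your argument is correct and shares the paper's core idea: both establish that the synchronous subspace is flow-invariant and then invoke uniqueness. The paper's proof is much terser---it simply notes that synchronous initial data $(\vartheta_i(0),\dot\vartheta_i(0))$ forces $\ddot\vartheta_i(0)$ to coincide within each orbit---whereas you package this via equivariance of the vector field under $\mathcal G$ and invariance of $\mathrm{Fix}(\mathcal G)$, which is the more rigorous and standard formulation and makes explicit why the argument works at \emph{every} time, not just $t=0$. Your flag on the compatibility hypothesis $P\pmb p=\pmb p$ is well taken: Definition~3 does not include it, yet without it the $p_i$ terms in \eqref{two} differ within a cluster and neither argument goes through; the paper tacitly assumes this here and only names the condition later (Definition~\ref{rs}).
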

\begin{proof}Assume $\vartheta_i(0)=\vartheta^k(0)$ and $\dot{\vartheta}_i(0)=\dot{\vartheta}^k(0)$ for all $i \in S_k$, $k=1,...,q$. It follows that $\ddot{\vartheta}_i(0)=\ddot{\vartheta}^k(0)$ for all $i \in S_k$, $k=1,...,q$, from which the assertion follows.
\end{proof}

{Stability of the nonlinear swing equation (3) has been studied in \cite{vu2015lyapunov} by using a Lyapunov function approach. The conditions of \cite{vu2015lyapunov} can be directly applied to ensure convergence of the flow invariant solution on a {stable} fixed point $\vartheta_i^*=\vartheta^k$, for all $i \in S_k$, $k=1,...,q$.}

{
The linearized  swing equation, which models the propagation of small disturbances  (e.g., affecting the initial condition or affecting the power supplied/demanded at different nodes) \cite{tyloo2020primary}, is obtained by linearizing Eqs.\ (3) about the {stable} fixed point $\vartheta_i^*$, $i=1,...,n$,}
\begin{equation} \label{lse}
       \ddot{\theta_i}=-\gamma_i\dot{\theta_i}+  {p}_i+  \sum_{{j=1}}^n {L_{ij}} \theta_j,
    \end{equation}
$i=1,2,...n$, where the Laplacian matrix $L=\{L_{ij}\}$ has entries $L_{ij}=  {A}_{ij} \cos(\vartheta_j^*-\vartheta_i^*) -\delta_{ij}  \sum_j {A}_{ij} \cos(\vartheta_j^*-\vartheta_i^*)$,  and $\delta_{ij}$ is the Kronecker delta.{Each term $p_i$ on the right hand side of Eq.\ (5) effectively represents a small power deviation.  In the rest of this paper we will often approximate the nonlinear swing equation (3) with the linearized swing equation (5),  under the assumption of small power deviations.}  

\begin{definition} \label{defsym}
 A symmetry for the set of Eqs.\ \eqref{lse} is a permutation matrix $P$ such that $P \pmb{\gamma}=\pmb{\gamma}$ 
 and $P {L}={L} P$. 
\end{definition}
\begin{lemma}
The set of Eqs.\ \eqref{two} and the set of Eqs.\ \eqref{lse} have the same set of symmetries.
\end{lemma}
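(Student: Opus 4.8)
The plan is to reduce the statement to a single equivalence and verify the two implications. Since the condition $P\pmb{\gamma}=\pmb{\gamma}$ appears verbatim in both Definition 3 and Definition \ref{defsym}, it suffices to show that, for any permutation matrix $P$ with associated node permutation $\sigma$ satisfying $P\pmb{\gamma}=\pmb{\gamma}$, one has $PA=AP \Leftrightarrow PL=LP$. I use throughout that the linearization in \eqref{lse} is taken about the symmetry-induced synchronous fixed point $\vartheta^*$ of Lemma 1: by that lemma (and the convergence statement following it) $\vartheta^*$ is constant on the clusters $S_1,\dots,S_q$, so $\vartheta_i^*=\vartheta_j^*$ whenever $i$ and $j$ lie in a common cluster. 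Write $w_{ij}:=\cos(\vartheta_j^*-\vartheta_i^*)$, so that $L_{ij}=A_{ij}w_{ij}$ for $i\neq j$ and $L_{ii}=-\sum_k A_{ik}w_{ik}$.

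For $PA=AP\Rightarrow PL=LP$: by the Remark, $PA=AP$ means $A_{\sigma(i)\sigma(j)}=A_{ij}$ for all $i,j$, so $\sigma$ is an automorphism in $\mathcal{G}$ and hence maps every cluster onto itself; therefore $\vartheta_{\sigma(i)}^*=\vartheta_i^*$ for all $i$, which gives $w_{\sigma(i)\sigma(j)}=w_{ij}$. For $i\neq j$ this yields $L_{\sigma(i)\sigma(j)}=A_{\sigma(i)\sigma(j)}w_{\sigma(i)\sigma(j)}=A_{ij}w_{ij}=L_{ij}$, and for the diagonal, reindexing the defining sum by $k=\sigma(m)$ gives $L_{\sigma(i)\sigma(i)}=-\sum_m A_{\sigma(i)\sigma(m)}w_{\sigma(i)\sigma(m)}=-\sum_m A_{im}w_{im}=L_{ii}$. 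Hence $L_{\sigma(i)\sigma(j)}=L_{ij}$ for all $i,j$, i.e.\ $PL=LP$.

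For the converse $PL=LP\Rightarrow PA=AP$: since $\vartheta^*$ is a stable fixed point, the phase difference across every line obeys $|\vartheta_j^*-\vartheta_i^*|<\pi/2$, so $w_{ij}>0$ on every line while $A_{ij}=0$ on every non-line; thus $L_{ij}\neq 0\Leftrightarrow A_{ij}\neq 0$ for $i\neq j$, and together with $L_{\sigma(i)\sigma(j)}=L_{ij}$ this shows $\sigma$ maps the edge set onto itself. When the coupling is homogeneous, every nonzero $A_{ij}$ equals a common constant, so preserving the edge set already forces $A_{\sigma(i)\sigma(j)}=A_{ij}$ for all $i,j$, i.e.\ $PA=AP$; for heterogeneous coupling one argues in addition that a $\pmb{\gamma}$-preserving $\sigma$ that maps the edge set onto itself must respect the cluster partition, so that $\vartheta^*$ is $\sigma$-invariant, $w_{\sigma(i)\sigma(j)}=w_{ij}$, and $A_{\sigma(i)\sigma(j)}w_{\sigma(i)\sigma(j)}=A_{ij}w_{ij}$ collapses to $A_{\sigma(i)\sigma(j)}=A_{ij}$. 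The forward implication is routine once Lemma 1 supplies the cluster-constancy of $\vartheta^*$; the main obstacle is this converse in the heterogeneous case, i.e.\ excluding ``accidental'' symmetries of $L$ produced by numerical coincidences among the products $A_{ij}w_{ij}$, and it is here that one genuinely needs $\vartheta^*$ to be the symmetry-induced synchronous operating point rather than an arbitrary fixed point.
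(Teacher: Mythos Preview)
Your forward implication $PA=AP\Rightarrow PL=LP$ is correct and is exactly the argument the paper gives: the paper factors it as $PA=AP\Rightarrow P\hat A=\hat A P\Rightarrow PL=LP$ with $\hat A_{ij}=A_{ij}\cos(\vartheta_i^*-\vartheta_j^*)$, using Remark 1 and the cluster-constancy of $\vartheta^*$ from Lemma 1, followed by the observation that nodes in the same orbit have equal row sums. Your direct computation with $w_{ij}$ and the reindexed diagonal sum is the same content.

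Where you go beyond the paper is in the converse. The paper's proof in fact only establishes the forward inclusion (symmetries of \eqref{two} are symmetries of \eqref{lse}); it does not address $PL=LP\Rightarrow PA=AP$ at all. Your treatment of the converse is therefore extra, and your instinct to be suspicious of it is well placed. Your homogeneous-coupling argument (stability gives $w_{ij}>0$ on edges, so the supports of $A$ and of the off-diagonal of $L$ coincide, and a single coupling constant then forces $A_{\sigma(i)\sigma(j)}=A_{ij}$) is sound. But in the heterogeneous case the step ``a $\pmb\gamma$-preserving $\sigma$ that maps the edge set onto itself must respect the cluster partition'' is not justified and is generally false: the clusters are orbits of $\mathrm{Aut}(A,\pmb\gamma)$, and a permutation preserving only $\pmb\gamma$ and the \emph{support} of $A$ need not lie in that group or fix $\vartheta^*$. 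Concretely, one can arrange distinct edge weights $A_{ij}$ and an operating point $\vartheta^*$ so that two distinct products $A_{ij}w_{ij}$ coincide, producing an ``accidental'' symmetry of $L$ that is not a symmetry of $A$. So the converse genuinely requires a genericity hypothesis, which neither you nor the paper states; your closing remark identifying this as the real obstacle is accurate.
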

{\begin{proof}
We break the proof in two parts. We first show that (i) $P A = A P$ implies $P \hat{A}= \hat{A} P$, where the matrix  $\hat{A}=\{\hat{A}_{vw}\}$ has entries $\hat{A}_{vw}=A_{vw} \cos(\vartheta_v^*-\vartheta_w^*)$. Then show that (ii) $P \hat{A}= \hat{A} P$ implies  $P L=L P$.
Assume $P$ is a symmetry for the set of Eqs.\ (3). From Remark 1, it follows that for all $v,w \in \mathcal{V}$, $A_{vw}=A_{v'w'}$ and $A_{wv}=A_{w'v'}$ where $v'$ ($w'$) is the node $v$ ($w$) is mapped to by $P$. We are now going to show that also $\hat{A}_{vw}=\hat{A}_{v'w'}$ and $\hat{A}_{wv}=\hat{A}_{w'v'}$. If $v$ gets mapped into $v'$ they belong to the same cluster, and so also $w$ and $w'$, hence $\vartheta^*_v=\vartheta^*_{v'}$ and $\vartheta^*_w=\vartheta^*_{w'}$. Hence,  $\cos(\vartheta^*_v-\vartheta^*_w)=\cos(\vartheta^*_{v'}-\vartheta^*_{w'})$, which proves (i). To prove (ii) we just need to note that 
for two nodes $k$ and $l$ to be in the same cluster it must be necessarily verified that  $\sum_j A_{kj}=\sum_j A_{lj}$ and so also that $\sum_j \hat{A}_{kj}=\sum_j \hat{A}_{lj}$, see also \cite{NC,S_nchez_Garc_a_2020}. 
\end{proof}}


 The linear approximation yields this expression for the flow in a line $(i,j) \in \mathcal{E}$:
\begin{equation}\label{flow}
    F_{ij}=A_{ij}(\theta_j(t)-\theta_i(t)).
\end{equation}

In the rest of this paper, except for Sec.\ IV, we introduce the following assumption of homogeneous damping terms:
\begin{assumption} \label{a}
 All the damping terms are the same $\gamma_i=\gamma$, $i=1,...,n.$
\end{assumption}
 The regime in which all the $\gamma_i=\gamma$ is of interest because it has been shown to lead to the best properties in terms of stability of the synchronous solution \cite{motter2013spontaneous}. So, unless specified otherwise, in Secs.\ II and III,  we set $\gamma_i=\gamma=0.9$. In Sec.\ IV we will remove assumption \ref{a} and generalize our results to the case of arbitrary $\gamma$'s.


\begin{lemma}
 If the original matrix $\tilde A$ is symmetric, the spectrum of the matrix $L$ is real. 
  \end{lemma}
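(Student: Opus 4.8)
The plan is to exhibit $L$ as the product of a positive diagonal matrix and a symmetric matrix, and then conclude by a similarity transformation. First I would unwind the definitions: recall $A_{ij}=J_i\tilde A_{ij}$ with $J_i=\omega_R/(2H_i)>0$, so $A=\mathrm{diag}(\mathbf J)\,\tilde A$ where $\mathbf J=(J_1,\dots,J_n)$. Because the cosine is even, the matrix $C$ with entries $C_{ij}=\cos(\vartheta_j^*-\vartheta_i^*)$ satisfies $C_{ij}=C_{ji}$, i.e.\ it is symmetric. Since $\tilde A$ is symmetric by hypothesis, the entrywise (Hadamard) product $M$ with $M_{ij}=\tilde A_{ij}\cos(\vartheta_j^*-\vartheta_i^*)$ is symmetric as well.

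Next I would rewrite the Laplacian in these terms. By its definition, $L_{ij}=A_{ij}\cos(\vartheta_j^*-\vartheta_i^*)-\delta_{ij}\sum_k A_{ik}\cos(\vartheta_k^*-\vartheta_i^*)=J_iM_{ij}-\delta_{ij}J_i\sum_k M_{ik}$, so that $L=\mathrm{diag}(\mathbf J)\,\mathcal L$ with $\mathcal L:=M-\mathrm{diag}(M\mathbf 1)$, $\mathbf 1=(1,\dots,1)^\top$. The matrix $\mathcal L$ is symmetric, being the difference of the symmetric matrix $M$ and a diagonal matrix. Then, since $\mathrm{diag}(\mathbf J)$ has strictly positive diagonal entries, $\mathrm{diag}(\mathbf J)^{1/2}$ is real and invertible and
\[
\mathrm{diag}(\mathbf J)^{-1/2}\,L\,\mathrm{diag}(\mathbf J)^{1/2}=\mathrm{diag}(\mathbf J)^{1/2}\,\mathcal L\,\mathrm{diag}(\mathbf J)^{1/2}
\]
is a real symmetric matrix. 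Hence $L$ is similar to a real symmetric matrix, so its eigenvalues are real, which is the claim.

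There is essentially no obstacle here beyond the algebraic bookkeeping; the one conceptual point worth flagging is that even though $L$ itself need not be symmetric when the inertias $H_i$ are heterogeneous, the heterogeneity enters only through the diagonal left factor $\mathrm{diag}(\mathbf J)$, and $L$ is self-adjoint with respect to the weighted inner product $\langle x,y\rangle=x^\top\mathrm{diag}(\mathbf J)^{-1}y$ — equivalently, the similarity above turns the problem into a genuinely symmetric one. A minor check is that $C$ is really symmetric, but this follows immediately from the evenness of the cosine and requires no assumption about the fixed point $\vartheta^*$ respecting any cluster structure.
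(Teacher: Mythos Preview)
Your proof is correct and follows essentially the same approach as the paper: factor $L=D\bar L$ with $D=\mathrm{diag}(\mathbf J)$ positive diagonal and $\bar L$ symmetric (your $\mathcal L$), then conjugate by $D^{1/2}$ to exhibit $L$ as similar to the real symmetric matrix $D^{1/2}\bar L\,D^{1/2}$. The only cosmetic difference is that you phrase the conclusion via similarity, whereas the paper traces an eigenvector through the substitution $\mathbf w=D^{-1/2}\mathbf v$; the content is identical.
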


\begin{proof}
We note that $\hat{A}_{ij}=A_{ij} \cos(\vartheta^*_{i}-\vartheta^*_{j})= J_i \bar{A}_{ij}$, where the matrix $\bar{A}=\{\bar{A}_{ij}\}$ with entries $\bar{A}_{ij}=\tilde{A}_{ij} \cos(\vartheta^*_{i}-\vartheta^*_{j})$ is symmetric. By defining the symmetric matrix $\bar{L}=\{{\bar L}_{ij}\}$ with entries ${\bar L}_{ij}={\bar A}_{ij} -\delta_{ij} \sum_j {\bar A}_{ij}$, we can write $L=D \bar{L}$ and the matrix $D$ is a diagonal matrix with diagonal entries $D_{ii}=J_i$. By construction, both the matrix $L$ and the matrix $\bar{L}$ have sums of their rows equal to zero. However, the matrix $L$ is generically asymmetric, while the matrix $\bar{L}$ is symmetric.
We then write
the eigenvalue equation 
     $L\mathbf{v}= D \bar{L} \mathbf{v}= \lambda \mathbf{v}$,
      where $\mathbf{v}$ is an eigenvector and $\lambda$ is an eigenvalue for $L$.
 Then, we premultiply the above equation by $D^{-1/2}$ and obtain:
 \begin{equation}\nonumber
     D^{1/2}\bar{L}\mathbf{v}=\lambda D^{-1/2} \mathbf{v}.
      \end{equation}
Now, by setting $\mathbf{w}=D^{-1/2} \mathbf{v}$ we can write
\begin{equation}\nonumber
     D^{1/2}\bar{L} D^{1/2}\mathbf{w}=\lambda \mathbf{w},
     \end{equation}
 where the matrix $L'=D^{1/2}\bar{L}D^{1/2}$  is symmetric,  and thus the eigenvalues $\lambda$ are real. \\
 To conclude, $D\bar L$ is a special case of a generically asymmetric matrix with real spectrum. However, under generic conditions, the real eigenvalues of $L$ and $\bar L$ are not the same. 
 \end{proof}

Since the sums of the entries along all the rows of the matrix $L$ (of the matrix $\tilde L$) are zero, it follows that this matrix  has at least one eigenvalue equal zero. The multiplicity of the zero eigenvalue is equal to the number of connected components of the network. 
We now distinguish the two cases that the network is (i) connected and (ii) not connected. In case (i), the Laplacian matrix has a single zero eigenvalue $\lambda_1=0$ with associated eigenvector $[1,1,...,1]$ and the remaining eigenvalues $\lambda_2 \geq \lambda_3 \geq ....\lambda_n$ are all negative.  In case (ii), assume there are $c$ connected components. Then, after appropriate labeling of the nodes, the Laplacian matrix can be written as 
\begin{equation}
    L=\begin{bmatrix}L_{1} & & \\ & \ddots & \\ & & L_{c}\end{bmatrix},
\end{equation}
where $L_i$, $i=1,...,c$ is the Laplacian matrix associated to connected component $i$. It follows that the multiplicity of the zero eigenvalue of the Laplacian matrix $L$ is equal to $c$, $\lambda_1=...=\lambda_c=0$,  and the corresponding eigenvectors are each one associated with a connected component $i=1,...,c$; these $c$ eigenvectors are called the component vectors of the network connected components. The remaining eigenvalues $\lambda_{c+1} \geq \lambda_{c+2} \geq ....\lambda_n$ are all negative. \\

\begin{definition}
Component vectors are vectors that have entries $j$ equal to $1$ if node $j$ is in connected component $i$ and equal to $0$ otherwise.
\end{definition}

\begin{lemma}\label{diag} Assume the network is connected. The left eigenvector of the matrix $L=D \tilde L$ associated with the zero eigenvalue is equal to $[D_{11}^{-1},D_{22}^{-1},...,D_{nn}^{-1}]$. 
\end{lemma}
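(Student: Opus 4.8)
The plan is to write the claimed vector in closed form, verify by direct computation that it lies in the left kernel of $L$, and then use the connectedness hypothesis to conclude that this left kernel is one–dimensional, so the claimed vector must span it. Throughout I would use the factorization established in the proof of the previous lemma, $L=D\bar L$ (written $D\tilde L$ in the statement), where $D$ is the diagonal matrix with $D_{ii}=J_i>0$ and $\bar L=\bar L^\top$ is the symmetric Laplacian whose rows — hence also whose columns — sum to zero, so that $\bar L\mathbf 1=0$ with $\mathbf 1=[1,\dots,1]^\top$.

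First I would set $\mathbf u=[D_{11}^{-1},\dots,D_{nn}^{-1}]^\top=D^{-1}\mathbf 1$, which is a well-defined nonzero vector since every $D_{ii}=J_i=\omega_R/(2H_i)>0$. A left eigenvector of $L$ with eigenvalue $0$ is precisely a nonzero solution of $\mathbf u^\top L=0$, equivalently $L^\top\mathbf u=0$. Using $L^\top=(D\bar L)^\top=\bar L^\top D^\top=\bar L D$ together with $D\mathbf u=\mathbf 1$, I compute
\begin{equation}\nonumber
L^\top\mathbf u=\bar L D\mathbf u=\bar L\mathbf 1=0,
\end{equation}
so $\mathbf u$ is indeed a left eigenvector of $L$ associated with the zero eigenvalue.

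It then remains to argue that no other left null direction exists. Because the network is connected, it was already noted above that $\lambda_1=0$ is a simple eigenvalue of $L$, hence $\dim\ker L=1$ and $\operatorname{rank}L=n-1$. Since $\operatorname{rank}L^\top=\operatorname{rank}L=n-1$, the left kernel $\ker L^\top$ is also one–dimensional; as $\mathbf u$ is a nonzero element of it, it spans $\ker L^\top$, which is the assertion. The argument involves no delicate step — the main points simply to keep straight are the notational identification of $\tilde L$ in the statement with the symmetric Laplacian $\bar L$ of the earlier proof, and the elementary fact that a symmetric matrix with zero row sums also has zero column sums.
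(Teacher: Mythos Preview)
Your proof is correct and follows essentially the same route as the paper's: both hinge on the factorization $L=D\bar L$ with $\bar L$ symmetric and $\bar L\mathbf 1=0$, and both identify the left null vector of $L$ via $D^{-1}\mathbf 1$. The only cosmetic difference is that the paper starts from the equation $\mathbf z^T D\bar L=0$ and reads off $\mathbf z^T D=\mathbf 1^T$, whereas you guess $\mathbf u=D^{-1}\mathbf 1$ and verify $L^\top\mathbf u=0$; your added rank argument for uniqueness is a welcome bit of extra care that the paper leaves implicit.
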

\begin{proof} By assumption, the matrix $L$ has only one zero eigenvalue. Consider the eigenvalue equation $\textbf{z}^T D \tilde{L}=\textbf{0}$, with left eigenvector $\textbf{z}$. The vector $\textbf{z}^T D$ is the left eigenvector of the matrix $\tilde L$ associated with the zero eigenvalue and therefore, it is equal to $[1,1,...,1].$ It follows that the entries of $\textbf{z}$ are the reciprocal of the entries on the main diagonal of the matrix $D$.
\end{proof}


We now consider the vector ${\pmb{\theta}}=[\theta_1,\theta_2,...\theta_n]$ and the vector ${\mathbf{p}}=[{p}_1,{p}_2,...,{p}_n]$, and rewrite Eq.\ \eqref{lse},
\begin{equation} \label{main}
    \ddot{\pmb{\theta}}(t) = -\gamma \dot{\pmb{\theta}}(t) + L \pmb{\theta}(t) + \mathbf{p}.
\end{equation}

We first assume the network is connected. We diagonalize $L$, $L=V \Lambda V^{-1}$, where $\Lambda=(\lambda_1,\lambda_2,...,\lambda_n)$ is the matrix of the eigenvalues of $L$ and $V=\{V_{ij}\}$ is the matrix of the eigenvectors of $L$.  
We multiply Eq. \eqref{two} on the left by $V^{-1}$ and by calling $\pmb{\eta}(t)=V^{-1} \pmb{\theta}(t)$ and $\mathbf{q}=V^{-1} \mathbf{p}$, we obtain,
\begin{equation} \label{diagmodes}
    \ddot{{\pmb{\eta}}}(t) = -\gamma \dot{{\pmb{\eta}}}(t) + \Lambda {\pmb{\eta}}(t) + {\mathbf{q}}, 
\end{equation}
which can be broken up into $n$ independent equations or `modes',
\begin{equation} \label{gen}
    \ddot{{{\eta}}}_i(t) = -\gamma \dot{{{\eta}}}_i(t) + \lambda_i {{\eta}_i}(t) + {{q}_i}, 
\end{equation}
$i=1,...,n$. {Modal decompositions similar to Eq. \eqref{gen} have been also obtained in \cite{bamieh2013price,guo2018graph,kettemann2016delocalization,tyloo2020primary,Powersys}.{Tyloo et al. \cite{tyloo2020primary} relax the constant inertia to damping ratio assumption and show that
their derivation is still valid with heterogeneous dynamical parameters.} In what follows we study the effects of network symmetries on the modal decomposition. We also show how the modes can be exploited in order to compute maximum flows over lines and in designing line capacities.} 

\begin{lemma}: Assume a balanced power grid, i.e., $\sum_i b_i=0$. Then, $q_1=0$. \\
\end{lemma}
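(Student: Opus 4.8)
The plan is to read off $q_1$ directly from the definition $\mathbf{q}=V^{-1}\mathbf{p}$ and recognize it, up to a harmless scalar, as $\sum_i b_i$. First I would note that $q_1$ is the first entry of $V^{-1}\mathbf{p}$, i.e.\ $q_1=\mathbf{z}^\top\mathbf{p}$, where $\mathbf{z}^\top$ is the first row of $V^{-1}$. Since, under the standing assumption that the network is connected, the first column of $V$ is the right eigenvector $[1,1,\dots,1]^\top$ associated with the simple eigenvalue $\lambda_1=0$, the relation $V^{-1}V=I$ forces $\mathbf{z}^\top$ to be a left eigenvector of $L$ for the zero eigenvalue, normalized so that $\mathbf{z}^\top[1,1,\dots,1]^\top=1$.

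Next I would invoke Lemma~\ref{diag}, which identifies the left eigenvector of $L=D\tilde L$ associated with the zero eigenvalue as (a scalar multiple of) $[D_{11}^{-1},D_{22}^{-1},\dots,D_{nn}^{-1}]$. Hence $\mathbf{z}=\kappa\,[D_{11}^{-1},D_{22}^{-1},\dots,D_{nn}^{-1}]^\top$ for the normalization constant $\kappa=\big(\sum_j D_{jj}^{-1}\big)^{-1}$.

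Finally I would substitute the definitions $p_i=J_i b_i$ and $D_{ii}=J_i$, which give $D_{ii}^{-1}p_i=b_i$, so that
\[
q_1=\mathbf{z}^\top\mathbf{p}=\kappa\sum_{i=1}^n D_{ii}^{-1}p_i=\kappa\sum_{i=1}^n b_i=0,
\]
the last equality being the balanced hypothesis $\sum_i b_i=0$.

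There is no real obstacle here; the only point requiring a little care is the normalization bookkeeping, namely checking that the first row of $V^{-1}$ is genuinely a \emph{scalar multiple} of the vector supplied by Lemma~\ref{diag} rather than that vector itself. This is immaterial for the conclusion, since $q_1$ vanishes iff $\sum_i b_i$ does, independently of $\kappa$. (If one drops connectedness, the single zero eigenvalue is replaced by a $c$-fold one with the component vectors as right eigenvectors, and the identical computation shows each associated entry of $\mathbf{q}$ vanishes provided each connected component is separately balanced; under the hypotheses in force the statement is exactly as given.)
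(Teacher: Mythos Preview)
Your proof is correct and follows essentially the same route as the paper: identify $q_1=\mathbf{z}^\top\mathbf{p}$ with $\mathbf{z}$ the left eigenvector of $L$ for $\lambda_1=0$, invoke Lemma~\ref{diag} to get $\mathbf{z}\propto[J_1^{-1},\dots,J_n^{-1}]^\top$, and collapse $\sum_i J_i^{-1}p_i$ to $\sum_i b_i$. Your handling of the normalization constant $\kappa$ is in fact more careful than the paper's, which silently takes $\mathbf{z}^\top=[J_1^{-1},\dots,J_n^{-1}]$ without tracking the scaling imposed by $V^{-1}V=I$.
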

\begin{proof}Consider $p_i=J_i b_i$ and $q_1=\mathbf{z^T} \mathbf{p}$, where $\mathbf{z}$ is the left eigenvector of $L$ associated with $\lambda_1$. \\
From Lemma \ref{diag}, we know that $\textbf{z}^T= [\frac{1}{J_1},\frac{1}{J_2},...\frac{1}{J_n}]$, so we get\\
\begin{equation}\nonumber
q_1=\sum_i\frac{1}{J_i}p_i\\
\end{equation}
and it follows that\\
\begin{equation}\nonumber
q_1=\sum_i \frac{1}{J_i}J_i b_i=\sum_i b_i=0.\\
\end{equation}
\end{proof}

Hence, for $i=1$, Eq.\ \eqref{gen} becomes
\begin{equation}\label{eq:single}
    \ddot{{{\eta}}}_1(t) + \gamma \dot{{{\eta}}}_1(t)  = 0,
\end{equation} 
from which we can see that for a large $t$, $\eta_1(t)$ approaches a constant, which depends on the initial conditions.

For $i>1$,
$\lambda_i<0$ ensures convergence of the corresponding $\eta_i(t)$ in Eq.\ \eqref{gen}  for large $t$.  Moreover, for $i=2,...,n$, Eq.\ \eqref{gen} can be written as,
\begin{equation}\label{UD}
    \ddot{{{\eta}_i}}(t) + 2 \zeta_i \omega_i \dot{{{\eta}_i}}(t) + {\omega_i}^2 {{\eta}_i}(t) = {{q}_i}, 
\end{equation}
where  $-\lambda_i=\omega_i^2$ and $\zeta_i={\gamma}/{(2 \omega_i)}$, $i=2,...,m$.
 Eq.\ \eqref{UD} is the equation of an underdamped ($0 < \zeta_i <1 $, $i=2,...,m$) or overdamped ($\zeta_i>1$, $i=m+1,...,n$) second order system forced by a step function of amplitude $q_i$.  Overdamped systems $i>m$ do not give rise to overshoots.  {The solution to \eqref{UD} can be written as $\eta_i(t)=\bar{\eta}_i(t)+\hat{\eta}_i(t)$, where $\bar{\eta}_i(t)$ is the free evolution and $\hat{\eta}_i(t)$ is the forced evolution. The free evolution decays exponentially in time. Thus, for large $t$, we can assume that $\eta_i(t) \simeq \hat{\eta}_i(t)$.\\
For overdamped systems, the forced solution is equal to,
\begin{equation}\label{OD}
    \eta_i(t)=\frac{q_i}{\omega_i^2}\Bigr[1+\frac{\chi_1e^{{-\omega_i\chi_2t}}-\chi_2e^{{-\omega_i\chi_1t}}}{2\sqrt{\zeta_i^2-1}}\Bigr]
\end{equation}\\
where $\quad \Bigr(\zeta_i-\sqrt{\zeta_i^2-1} \Bigr)=\chi_1, \quad \Bigr(\zeta_i+\sqrt{\zeta_i^2-1} \Bigr)=\chi_2$}.

For underdamped systems, the forced solution is equal to,
\begin{equation}\label{eta}
    {\eta}_i(t)=\frac{q_i}{\omega_i^2}\Bigl[1 - \frac{e^{-\zeta_i \omega_i t}}{\sqrt{1-\zeta_i^2}} \sin \Bigl(\omega_i \sqrt{1-\zeta_i^2 }t -\cos^{-1} \zeta_i \Bigr) \Bigr].
\end{equation}

{In realistic systems with constant damping and inertia, it has been found that all modes are underdamped and propagate through the whole system with $\frac{D}{\lambda_i}< \gamma^{-1}, \quad \forall i>1$ \cite{tyloo2020primary,Keyplayer_Tyloo}.  Thus, for the remainder of the paper, we will only focus on underdamped modes.}

The underdamped system in \eqref{eta} converges in steady state to
\begin{equation}
    {\eta}_i^{ss}=\frac{q_i}{\omega_i^2},
\end{equation}
{
and is upper and lower bounded,
\begin{equation}
   \eta_i^-(t) \leq \eta_i(t) \leq \eta_i^+(t),  
\end{equation}
where $\eta_i^{\pm}(t)=({\omega_i^2} \sqrt{1-\zeta_i^2})^{-1} ({q_i}\sqrt{1-\zeta_i^2} \pm {|q_i|} e^{- \zeta_i \omega_i t})$.
}

The peak time is equal to 
\begin{equation}
    t_i=\frac{\pi}{\omega_i \sqrt{1-\zeta_i^2}},
\end{equation}
and the peak value is given by
\begin{equation} \label{peak}
    {\eta}_i^{\mbox{peak}}=\frac{q_i}{\omega_i^2}\Bigl[1 + \exp({-\pi \zeta_i}/{\sqrt{1-\zeta_i^2}}) \Bigr].
\end{equation}
 The larger $q_i$ is, the higher is the peak. The smaller $\omega_i^2=-\lambda_i$ is, the higher is the peak. Finally, the smaller is $\zeta_i$, the higher is the peak. The eigenvalue $\lambda_2$ is associated with the smallest $\omega_2$ and also the smallest $\zeta_2$, and so is generically responsible for the largest peak.

As we know the $\eta_i(t)$'s, we can also compute the $\theta_i(t)$'s using the formula $\pmb{\theta}(t)=V \pmb{\eta}(t)$, or equivalently,
\begin{equation} \label{lincomb}
    \theta_i(t)= \sum_j V_{ij} \eta_j(t),
\end{equation}\\
It is important to note that the $V_{ij}$ can be either positive or negative, and so are the $q_i$.

In general, the peak of $\theta_i(t)$ will be more strongly affected by $\eta_2(t)$, followed by $\eta_3(t)$, $\eta_4(t)$, etc., but it will also depend on the terms $ V_{ij} q_j$. We can also compute the values of $\theta_i$ at steady state,
\begin{equation}\label{steadystate}
    \theta_i^{ss}= \sum_j V_{ij} \eta_j^{ss} = \sum_j V_{ij} \frac{q_j}{\omega_j^2}.
\end{equation}

From Eq. \eqref{flow}, the flows are equal to
\begin{equation}\label{floweta}
    F_{ij}(t)=\sum_{k=2}^n (V_{jk}-V_{ik}) \eta_k(t),
\end{equation}
where the summation starts from $k=2$ since for a connected network $V_{i1}=V_{j1}$ (the first column of the matrix $V$ is the eigenvector associated with the eigenvalue $0$.)
{Each of the terms in the summation on the right hand side of Eq.\ (21) can be either positive or negative. We  rewrite the right hand side of (21) as $\sum_{k+} (V_{jk}-V_{ik}) \eta_k(t)+\sum_{k-} (V_{jk}-V_{ik}) \eta_k(t)$, where the first summation is only over positive terms and the second summation is only over negative terms. The absolute flow in a line is then upper bounded by
\begin{equation} \label{ubounds}
|F_{ij}(t)| \leq \sum_{k+} (V_{jk}-V_{ik}) \eta_k^+(t)+\sum_{k-}|(V_{jk}-V_{ik}) \eta_k^-(t)|.
\end{equation}}

\subsection{Linear combinations of modes}

{
Equation \eqref{lincomb}  (equation \eqref{floweta}) shows that the individual nodal displacements $\theta_i(t)$ (the flows $F_{ij}(t)$) can be written  as linear combinations of the modes $\eta_i(t)$, $i=1,..,n$. This has immediate practical implications. For example, because each mode is both upper bounded and lower bounded, upper bounds on the absolute flows can be computed, see e.g., Eq.\ \eqref{ubounds}. However, an open question is how peaks and peak times can be computed for linear combinations of second order modes, such as those in Eqs.\ \eqref{lincomb} and \eqref{floweta}. The reason why this is challenging is that 
 peak times for different $\eta$s are different, thus a linear combination of the peak $\eta$ values will not return the peak flow. 
In this subsection we present a simple numerical technique to approximate the peak flows, based only on minimal knowledge about the individual modes.}

Our original assumption is that the flows $F_{ij}$ are small, which is the condition for approximating Eq. \eqref{two} with the linear swing equation \eqref{lse}. However, flows may increase as a result of a variety of perturbations, including internal power surges and outside attacks. It is also possible that the absolute flow $|F_{ij}|$ over a line may increase above the so-called line capacity, causing failure of that line and  possibly lead to other line failures in a cascade. A similar event would lead to violation of the small flow assumption.

Reference \cite{DIcascade} has pointed out the importance of the transient dynamic as it is possible that an absolute flow may exceed a line capacity transiently. Our analysis allows us to study the transient response of the swing equation dynamics under the assumption of small flows. Under this condition, we can compute the peak flows over each line, which can be used to design line capacities
 that are robust against the effects of transient perturbations of small entity.



Next we present a numerical technique to find the maximum absolute flow over a line. 
For large $\gamma$, there are two possible ways in which a flow in a line may evolve over time; \emph{\textbf{(1)} The absolute flow reaches its maximum value at the first peak time. \textbf{(2)} The absolute flow reaches its maximum value at steady state}. Thus, to find the maximum flow accurately, one needs to calculate both the first peak and the steady state value and pick the one which is largest in magnitude.  

In order to compute the peak times, we note that
from Eq. \eqref{floweta} we have that
\begin{equation}\label{flowtheta}
\dot{F}_{ij}=C_2\dot{\eta}_2+C_3\dot{\eta}_3+...C_n\dot{\eta}_n
\end{equation}
where for underdamped modes
\begin{equation}\label{etadot}
    \dot{\eta}_k=\frac{q_k}{\omega_k\sqrt{1-{\zeta_k}^2}}e^{-\zeta_k \omega_k t} \sin \omega_k \sqrt{1-{\zeta_k}^2}t,\hspace{3mm} k=m+1,...,n
\end{equation}
Since $\zeta_k \omega_k=\frac{\gamma}{2}$, the exponents in Eq. \eqref{etadot} are the same for $k=m+1,...,n$. Further, say $\varsigma_k=\omega_k \sqrt{1-{\zeta_k}^2}$. 
A necessary condition for the flow $F_{ij}$ to achieve  either a maximum or a minimum is 

\begin{equation}\label{sumsin}
{\sum_{k=m+1}^n} \frac{C_k q_k}{\varsigma_k} \sin{\varsigma_k t}=0.
\end{equation}

Since the values of $\varsigma_k$ are different for different $k$, we are unable to compute a closed form solution for \eqref{sumsin}. However,  a root-finding algorithm can be used to calculate the peak time and the maximum absolute flow through the line. For example, that can be done by computing an initial guess $\tau_{ij}$ and by iterating Newton's method towards convergence on the peak time.

For large enough $\gamma$ we expect the peak time to be the root of \eqref{sumsin} closest to the origin. 
{Then a good choice of the initial guess for the root-finding algorithm can be obtained by Taylor expanding \eqref{sumsin} up to third order about the origin and setting the third order expansion equal to zero. This leads to the following initial guess,}

\begin{equation}\label{rsa}
    \tilde{\tau}_{ij}=\sqrt{\frac{6 \sum_{k=m+1}^n C_k q_k }{\sum_{k=m+1}^n C_k q_k \varsigma_k^2}}.
\end{equation}
To demonstrate this, we consider the `bottleneck network' shown in Fig. \ref{fig1} with $\gamma=1.5$ {and $\textbf{p}=[0.3,-0.1,-0.1,0.1,-0.1,-0.2,0.1]^T$}. {We use
Eq.\ \eqref{rsa} to compute the initial guesses which are then used to approximate the first peak times and subsequently the first peak values that are shown in Table \ref{tab:SSFP}.} Table I also shows that for this example our approach based on the linearized swing equation (5) well approximates the maximum flow obtained by integration of the full nonlinear swing equation (3).

{For non large $\gamma$s however, a slightly more cumbersome approach has to be taken to find the initial guess. Peak $\eta$ values are first multiplied with their respective flow coefficients and then these values are cumulatively added in an ascending order of peak time. The peak time corresponding to the largest of these cumulatively added values can be used as the initial guess for the flow. It is important to note that the closer the peak times are, the more accurate this initial guess is and the farther apart they are, the more iterations will be required to converge to the actual solution.
}

{
We wish to emphasize that the approach described in this section can be applied to any linear combination of second order modes. For example,we can use this approach to approximate peak flows in quotient networks, Eq.\ \eqref{quot} and in error networks, Eq.\ \eqref{error}.}

\begin{table}
\caption{\label{tab:SSFP} {Steady state, first peak and maximum flow through lines for Fig. \ref{fig1}. The values for the second column are computed using Eq.\ \eqref{steadystate} and the values for the third column are computed using Eq.\ \eqref{rsa}, Eq. \eqref{eta} and Eq. \eqref{floweta}. The fourth and fifth columns are calculated by numerically solving the linear swing equation (5) and the non-linear swing equation (3), respectively.}}
\begin{center}
    \begin{tabular} {|l|l|l|l|l|l|l|}
    \hline
    &&& Linear & Non-Linear\\ \hline
          Line&	Steady&  First&	Max & Max\\ 
          &	    State&  Peak&  Flow &  Flow\\ \hline
          $\theta_2-\theta_1$& -0.1750&  -0.1536&  -0.1750& -0.1757 \\ \hline
          $\theta_3-\theta_1$& -0.1250&  -0.1466&    -0.1466& -0.1468\\ \hline
          $\theta_3-\theta_2$& 0.0500&  0.0000&    0.0500 & 0.0502\\ \hline 
          $\theta_4-\theta_2$& 0.0750&  0.0527&    0.0750  &0.0751\\ \hline          
          $\theta_5-\theta_2$& -0.2000&  0.0000&    -0.2000 & -0.2014\\ \hline  
          $\theta_4-\theta_3$& 0.0250& 0.0509 &  0.0509& 0.0509 \\\hline
          $\theta_6-\theta_5$& -0.1000&  0.0000&    -0.1000& -0.1002\\ \hline
          $\theta_7-\theta_5$& 0.0000&  0.0753&    0.0753& 0.0754 \\ \hline
          $\theta_7-\theta_6$& 0.1000&   0.1221&  0.1221& 0.1222 \\ \hline
           \end{tabular}
          
           \end{center}
           \end{table}

\begin{figure}
    \includegraphics[width=3.45in]{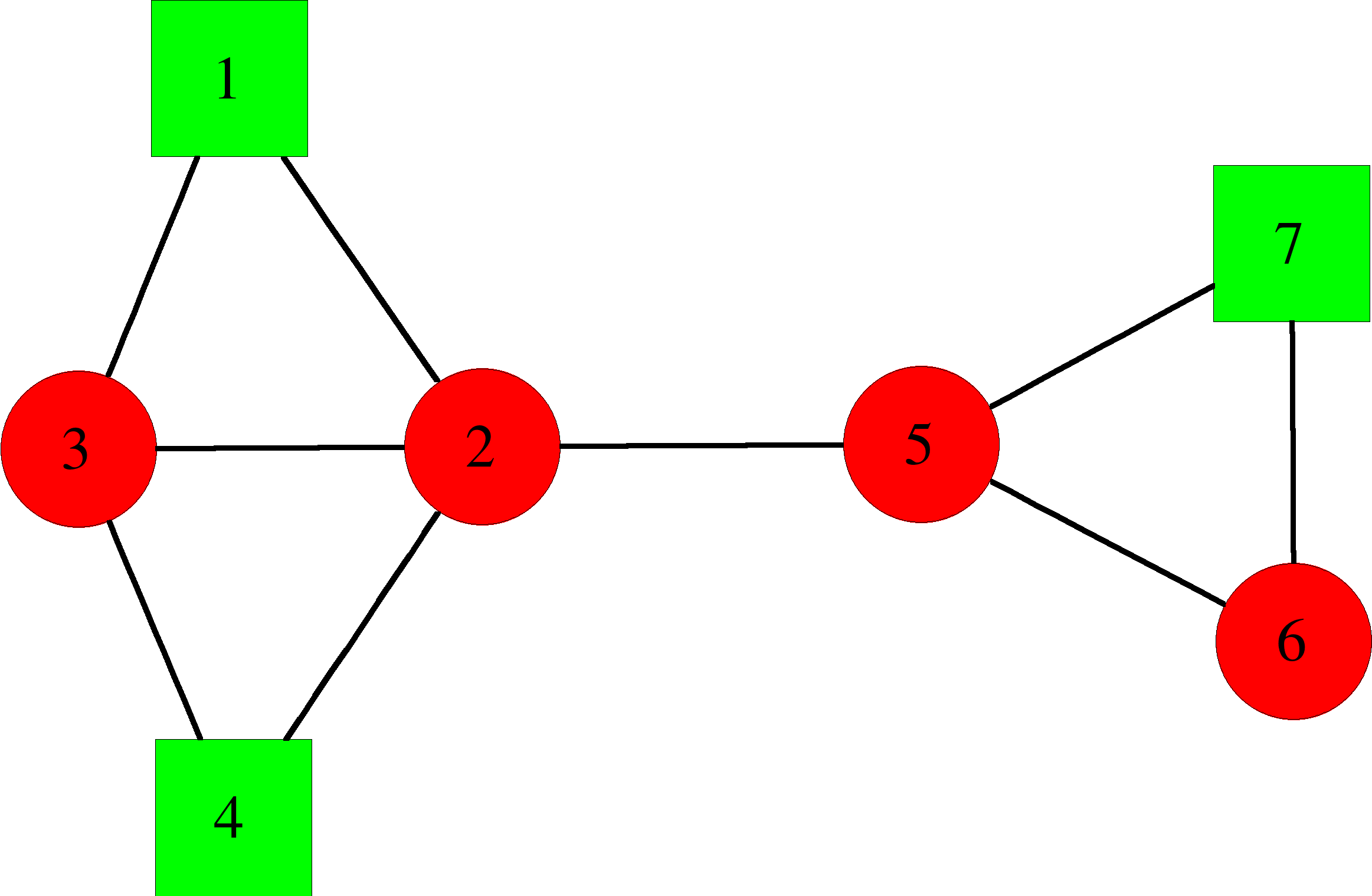}
    \caption{{Bottleneck network with 7 nodes. Red circles represent motor nodes while green squares represent generator nodes.}}
    \label{fig1}
\end{figure}


\section{Use of symmetry in the transient analysis of the linear swing equation}

Knowledge of the 
symmetries may be used to reduce the computational burden associated with the modal decomposition. In fact, we can apply the modal decomposition to a reduced `quotient network' where all the redundancies are eliminated, thereby making the analysis easier and faster.

 We  introduce the $n\times q$ indicator matrix $E$, where $n$ is the number of nodes and $q$ is the number of orbits or clusters. Each entry $E(i,j)=1$ if node $i$ belongs to the cluster $j$ and 0 otherwise. Another way to write the indicator matrix is the following, 
$E=[\mathbf{e}_1,\mathbf{e}_2,...,\mathbf{e}_q]$ where each vector $\mathbf{e_i}$ is an indicator vector, that is entry $j$ of vector $\mathbf{e}_i$ is 1 if node $j$ is in cluster $i$ and is 0 otherwise.

\begin{lemma}
 The set of states such that $\theta_l(t)=\theta_m(t)$ and $\dot{\theta}_l(t)=\dot{\theta}_m(t)$ for all $l,m \in S_i$, $i=1,...,q$, define
an invariant manifold \cite{NC}.
\end{lemma}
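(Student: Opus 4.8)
The plan is to identify the candidate set with the fixed-point subspace of the automorphism group acting on the full state space and then use equivariance, in exactly the spirit of Lemma~1. Write the linearized swing equation \eqref{lse} in first-order form with state $(\pmb{\theta},\dot{\pmb{\theta}})\in\R^{2n}$ and right-hand side $f(\pmb{\theta},\dot{\pmb{\theta}})=\bigl(\dot{\pmb{\theta}},\,-\gamma\dot{\pmb{\theta}}+L\pmb{\theta}+\mathbf{p}\bigr)$, and let $\mathcal{G}$ act diagonally by $P\cdot(\pmb{\theta},\dot{\pmb{\theta}})=(P\pmb{\theta},P\dot{\pmb{\theta}})$. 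Because every $P\in\mathcal{G}$ permutes nodes only within clusters, a configuration satisfies $P\pmb{\theta}=\pmb{\theta}$ for all $P\in\mathcal{G}$ exactly when it is constant on each $S_i$; hence the set $M$ in the statement is precisely $\mathrm{Fix}(\mathcal{G})=\{(\pmb{\theta},\dot{\pmb{\theta}}):P\pmb{\theta}=\pmb{\theta},\ P\dot{\pmb{\theta}}=\dot{\pmb{\theta}}\ \forall P\in\mathcal{G}\}$.

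The key step is to check that $f$ commutes with this action. The first block is trivially equivariant; the $-\gamma\dot{\pmb{\theta}}$ block is equivariant because the damping is homogeneous (Assumption~\ref{a}, i.e.\ $P\pmb{\gamma}=\pmb{\gamma}$); the $L\pmb{\theta}$ block is equivariant because $PL=LP$ by Definition~\ref{defsym}; and the constant term $\mathbf{p}$ is carried into itself provided $P\mathbf{p}=\mathbf{p}$ for all $P\in\mathcal{G}$, i.e.\ provided the powers are constant on each cluster. Granting this, $f$ is $\mathcal{G}$-equivariant, and the standard argument closes the proof: if $(\pmb{\theta}(0),\dot{\pmb{\theta}}(0))\in M$, then for each $P\in\mathcal{G}$ the curve $t\mapsto(P\pmb{\theta}(t),P\dot{\pmb{\theta}}(t))$ solves the same initial value problem as $t\mapsto(\pmb{\theta}(t),\dot{\pmb{\theta}}(t))$, so by uniqueness the two coincide and the trajectory remains in $\mathrm{Fix}(\mathcal{G})=M$. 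This is the same mechanism used for the nonlinear equation in Lemma~1, now applied to \eqref{lse}.

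If one prefers a direct verification avoiding the $\mathrm{Fix}$ formalism, I would instead fix $l,m\in S_i$, pick $P\in\mathcal{G}$ sending $l$ to $m$ with node permutation $\sigma$, and show $\ddot\theta_l=\ddot\theta_m$ whenever the state lies in $M$: the damping terms agree since $\dot\theta_l=\dot\theta_m$ on $M$; the coupling terms agree because $L_{lj}=L_{m,\sigma(j)}$ (the matrix form of Remark~1, a consequence of $PL=LP$) together with $\theta_{\sigma^{-1}(j')}=\theta_{j'}$ on $M$ (since $\sigma$ maps each cluster onto itself), which after reindexing gives $\sum_j L_{lj}\theta_j=\sum_{j'}L_{mj'}\theta_{j'}$; and the forcing terms agree iff $p_l=p_m$. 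Then $\theta_l-\theta_m$ satisfies $\ddot y+\gamma\dot y=0$ with zero initial data on $M$, hence vanishes for all $t$, and the same holds for every within-cluster pair, so $M$ is flow-invariant.

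There is no deep obstacle here; the points that need care are the index bookkeeping in $\sum_j L_{lj}\theta_j=\sum_{j'}L_{mj'}\theta_{j'}$ and, above all, being explicit that invariance of $M$ under the forced equation \eqref{lse} requires the powers to respect the symmetry ($P\mathbf{p}=\mathbf{p}$), which is precisely the setting in which the reduction to the quotient network \eqref{quot} is exact. When the powers do not respect the symmetry, $M$ fails to be flow-invariant, the quotient description is only an approximation, and the component of the solution transverse to $M$ is exactly what the error dynamics \eqref{error} is built to quantify.
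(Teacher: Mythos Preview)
Your proof is correct. The paper does not actually prove this lemma---it simply cites the reference and moves on---so there is no in-paper argument to compare against directly; your equivariance/$\mathrm{Fix}(\mathcal{G})$ argument is the standard one, and your direct-verification alternative is exactly the mechanism the paper sketches for the nonlinear analog in Lemma~1. You are also right to flag the implicit hypothesis $P\mathbf{p}=\mathbf{p}$: the lemma as stated omits it, but flow-invariance of the cluster-synchronous manifold under the forced equation \eqref{lse} genuinely requires the powers to be cluster-constant (this is the setting of Section~III.A), and when they are not, the transverse motion is precisely what the error decomposition \eqref{error} is built to capture.
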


The full network with Laplacian matrix $L$ can  be transformed into its corresponding quotient network with Laplacian matrix $\tilde{L}$ by the transformation \cite{schaub2016graph},
\begin{equation}\nonumber
    \tilde{L}=((E^TE)^{-1}E^T)LE.\\
\end{equation} 

By pre-multiplying Eq. \eqref{main} by $((E^TE)^{-1}E^T)$, we obtain,

\begin{equation} \label{quot}
    \ddot{\tilde{\pmb{ \theta}}}(t) = -\gamma \dot{\tilde{\pmb{ \theta}}}(t) + \tilde{L} \tilde{\pmb{ \theta}}(t) + \tilde{\mathbf{ p}},
\end{equation}
where the $q$-dimensional vectors ${\tilde{\pmb{\theta}}}(t) = ((E^TE)^{-1}E^T) {\pmb{\theta}}(t)$ and $\tilde{\mathbf{p}}=[\tilde{p}_1,\tilde{p}_2,...,\tilde{p}_q]$ is equal to $\tilde{\mathbf{p}}= ((E^TE)^{-1}E^T) {\textbf{p}}$, .

\begin{definition} \label{rs}
We say that the power vector $\textbf{p}$ respects the symmetries if for all the nodes $i$ in the same cluster $S_j$, $p_i=\tilde{p}_j$, $j=1,...,q$. 
\end{definition}

In what follows we first consider the case that the power vector $\textbf{p}$ respects the symmetries and then consider the more general case that it does not.

\subsection{The {case that} the power vector $\mathbf{p}$ respects the symmetries}

When the power vector respects the symmetries, the forced evolution for the full network and the quotient network will be \emph{the same}, as we explain in what follows.


\begin{definition}\label{redundant_non}
A redundant eigenvector $\pmb{\rho}$ of the Laplacian matrix $L$ has sum of entries corresponding to each cluster equal to zero and a non redundant eigenvector $\pmb{\nu}$ of the Laplacian matrix have entries corresponding to each cluster that are all the same \cite{S_nchez_Garc_a_2020}. All the redundant  eigenvectors are orthogonal to $\mathbf{e}_i$, $i=1,..,q$. Contrarily, the non-redundant eigenvectors are not. That means $\mathbf{e}_i^T\pmb{\rho}=0$ and $\mathbf{e}_i^T\pmb{\nu} \neq 0$, $i=1,..,q$, where $\pmb{\rho}$ is any redundant vector, and $\pmb{\nu}$ any non redundant vector.
\end{definition}

From Definition \ref{redundant_non} we see that the matrix $L$ has $q$ non-redundant eigenvectors $\pmb{\nu}$'s and $(n-q) $ redundant eigenvectors $\pmb{\rho}$'s \cite{S_nchez_Garc_a_2020}.

\begin{remark}
{From Definition \ref{redundant_non} it follows that the modes \eqref{gen} can be  either redundant or nonredundant. The numer of redundant modes is equal to $(n-q)$ and the numer of nonredundant modes is equal to $q$.}
\end{remark}

All of the eigenvalues of the matrix $\tilde{L}$ are also eigenvalues of the matrix $L$ \cite{S_nchez_Garc_a_2020}. However, the matrix $L$ has additional eigenvalues that are not eigenvalues of $\tilde{L}$. When an eigenvalue of $L$ is also an eigenvalue of $\tilde{L}$, we say that it is `inherited' by the quotient network.
\begin{lemma}
Only the eigenvalues of $L$ associated with non-redundant eigenvectors are inherited by the quotient network. 
\end{lemma}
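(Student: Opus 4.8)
The plan is to establish a correspondence between the eigenstructure of $\tilde{L}$ and the non-redundant eigenvectors of $L$, using the indicator matrix $E$ as a bridge. First I would recall the defining relation $\tilde{L}=((E^TE)^{-1}E^T)LE$ and the fact (from Definition \ref{redundant_non}) that the columns of $E$ span the same $q$-dimensional subspace as the non-redundant eigenvectors of $L$, while the redundant eigenvectors are orthogonal to every column of $E$, i.e.\ $E^T\pmb{\rho}=\mathbf{0}$. The key structural observation is that the column space of $E$ is $L$-invariant: since $P$ commutes with $L$ for every symmetry $P$, any vector that is constant on clusters is mapped by $L$ to another vector that is constant on clusters (this is essentially the statement that $L$ respects the equitable partition; it also follows from the row-sum equality $\sum_j \hat A_{kj}=\sum_j \hat A_{lj}$ used in the proof of Lemma 4). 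Hence $LE = E\tilde{L}$, which is the one identity doing all the work.

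Given $LE=E\tilde L$, the forward direction is immediate: if $\tilde L \mathbf{u}=\mu\mathbf{u}$ then $L(E\mathbf{u})=E\tilde L\mathbf{u}=\mu(E\mathbf{u})$, so $\mu$ is an eigenvalue of $L$ with eigenvector $E\mathbf{u}$, which is constant on each cluster and therefore non-redundant (and nonzero since $E$ has full column rank). This shows every eigenvalue of $\tilde L$ is an eigenvalue of $L$ attached to a non-redundant eigenvector. For the converse — that no eigenvalue of $L$ attached to a \emph{redundant} eigenvector is inherited — I would argue by a dimension/decomposition count: the whole space $\R^n$ splits as the direct sum of the column space of $E$ (dimension $q$, spanned by non-redundant eigenvectors after we note $L$ restricted to this invariant subspace is diagonalizable with real spectrum by Lemma 3) and the orthogonal complement spanned by the $(n-q)$ redundant eigenvectors, which is also $L$-invariant. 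The $q$ eigenvalues of $\tilde L$ (counted with multiplicity) are exactly the eigenvalues of $L$ restricted to the column space of $E$; an eigenvalue arising only on the redundant side would force a redundant eigenvector to lie in the column space of $E$, contradicting $E^T\pmb{\rho}=\mathbf 0$ together with full column rank of $E$. I should be slightly careful about repeated eigenvalues that happen to appear on both the redundant and non-redundant sides — the precise claim is about \emph{which eigenvectors} are inherited, namely only the non-redundant ones, and the argument above handles that at the level of invariant subspaces.

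The main obstacle I anticipate is the converse direction when there is spectral degeneracy between the redundant and non-redundant blocks: one must be careful to phrase "inherited" as a property of eigenvectors (or of the restricted operator on $\mathrm{col}(E)$) rather than of bare eigenvalues, and to justify that $L$ acts diagonalizably on $\mathrm{col}(E)$ so that the block decomposition is clean. Assuming Lemma 3 (reality of the spectrum of $L$) and the analogous symmetrization on the invariant subspace $\mathrm{col}(E)$, this is routine; the bulk of the proof is simply verifying $LE=E\tilde L$ from the cluster-constancy of $L$-images of cluster-constant vectors, and then reading off both implications.
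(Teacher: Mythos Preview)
Your argument is correct, but it takes a different route from the paper. The paper inserts the spectral decomposition $L=VDV^T$ directly into $\tilde L=(E^TE)^{-1}E^TLE$, splits $V=[V_\rho,V_\nu]$ and $D=D_\rho\oplus D_\nu$, and uses $E^TV_\rho=0$ (orthogonality of redundant eigenvectors to the indicator vectors) to kill the redundant block, leaving $\tilde L=(E^TE)^{-1}HD_\nu H^T$ with $H=E^TV_\nu$. Your approach instead proves the intertwining relation $LE=E\tilde L$ from $L$-invariance of the cluster-constant subspace, then reads off the eigenvalue correspondence $\tilde L\mathbf u=\mu\mathbf u\Rightarrow L(E\mathbf u)=\mu(E\mathbf u)$ and handles the converse by a dimension count on the invariant decomposition $\mathrm{col}(E)\oplus\mathrm{col}(E)^\perp$.

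What your approach buys: it never writes $L=VDV^T$, so it does not implicitly assume $L$ is symmetric (recall $L=D\bar L$ is generically asymmetric here, so the paper's $V^T$ should really be $V^{-1}$); and you are explicit about the degenerate case where an eigenvalue has both redundant and non-redundant eigenvectors, correctly locating the claim at the level of invariant subspaces rather than bare eigenvalues---a subtlety the paper only touches on later in an example. What the paper's approach buys: it is shorter and makes the role of the orthogonality $E^T\pmb\rho=\mathbf 0$ completely transparent in one line, without needing to verify $L$-invariance of $\mathrm{col}(E)$ separately.
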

\begin{proof}

Consider the equation:\\
\begin{equation}
 \tilde{L}=(E^TE)^{-1}E^T  L  E=  (E^TE)^{-1}E^T  (VDV^T) E,   
\end{equation}
{where $V$ is the matrix of the eigenvectors of $L$ and $D$ is the matrix of eigenvalues of $L$.}

The matrix $V=[V_{\rho},V_{\nu}]$ where $V_\rho$ has all the redundant eigenvectors and $V_{\nu}$ has all the non-reduntant eigenvectors. Further, $D$ can be written as $D_\rho\oplus D_{\nu}$, where $D_\rho$ is a diagonal matrix with all the $(n-q)$ redundant eigenvalues and $D_{\nu}$ is a diagonal matrix with all the $q$ nonredundant eigenvalues. Now we see that $E^TV_\rho=0$ due to the orthogonality. Therefore,

\begin{equation}\label{Lq}
   \tilde{L}=(E^TE)^{-1}E^T  (V_{\nu}D_{\nu}V_{\nu}^T) E =(E^TE)^{-1}(HD_{\nu}H^T), 
\end{equation}
where $H=E^TV_{\nu}$ is a square $q$-dimensional matrix.\\

\end{proof}

From Eq. \eqref{Lq}, we can see that only non-redundant eigenvalues and eigenvectors are inherited by the quotient network. Moreover since the power vector $\mathbf{p}$ respects the symmetries, the  entries of the vector $\tilde{{p}}_j=p_i$ for any node $i$ in cluster $S_j$ (see Definition \ref{rs}).


From $\sum_j{p}_j=0$ it follows that $\sum_{j=1}^m= n_j{\tilde p}_{j}=0$, where $n_j$ is the number of nodes in each cluster.

\begin{remark}
In the case that the network is connected, 
the eigenvector of the Laplacian matrix $[1 1 ... 1]$, associated with the zero eigenvalue, is non-redundant. So, the zero eigenvalue is inherited by the quotient network. Further, it can also be proved under generic conditions that in the case of a disconnected network, the quotient network inherits $c$ zero eigenvalues from the full network, where $c$ is the number of components. 
\end{remark}

\begin{theorem}
Assume the network is connected.
The entry of the vector $\tilde{\mathbf{q}}$ corresponding to the zero eigenvalue  is 0. 
\end{theorem}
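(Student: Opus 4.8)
The plan is to mirror the structure of the earlier Lemma stating that $q_1 = 0$ for the full network (``Assume a balanced power grid... Then $q_1 = 0$''), but carried out in the quotient network. The key observation is that the quotient dynamics \eqref{quot} has exactly the same algebraic form as the full dynamics \eqref{main}, with $L$ replaced by $\tilde L$ and $\mathbf p$ replaced by $\tilde{\mathbf p}$, so the same reasoning applies verbatim once I identify the left eigenvector of $\tilde L$ associated with its zero eigenvalue.

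First I would recall, from the Remark preceding the Theorem, that since the network is connected the eigenvector $[1,1,\dots,1]$ of $L$ is non-redundant, hence the zero eigenvalue is inherited by the quotient network; so $\tilde L$ has a zero eigenvalue, and (under the standing assumption that the full network is connected, so that this eigenvalue is simple) it is the unique one. Next I would diagonalize $\tilde L = \tilde V \tilde \Lambda \tilde V^{-1}$ and set $\tilde{\mathbf q} = \tilde V^{-1} \tilde{\mathbf p}$, exactly as in Eq.\ \eqref{diagmodes} applied to the quotient system. The entry of $\tilde{\mathbf q}$ corresponding to the zero eigenvalue is then $\tilde z^T \tilde{\mathbf p}$, where $\tilde z$ is the left eigenvector of $\tilde L$ associated with the zero eigenvalue. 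So the whole statement reduces to computing $\tilde z$ and checking that $\tilde z^T \tilde{\mathbf p} = 0$.

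The main step is therefore to determine $\tilde z$. I expect $\tilde L$ to have the structure $\tilde L = \tilde D \,\check L$ with $\check L$ symmetric (the quotient of the symmetric Laplacian $\bar L$ from Lemma on real spectrum), so that, by the same argument as in Lemma~\ref{diag}, the left eigenvector of $\tilde L$ for the zero eigenvalue is the vector of reciprocals of the diagonal entries of $\tilde D$. Concretely, using $\tilde L = ((E^TE)^{-1}E^T) L E$ and $L = D\bar L$, one finds $(E^TE)^{-1}$ is diagonal with entries $n_j^{-1}$, and the row sums of $\tilde L$ vanish by construction; tracking the $J_i$ factors shows the relevant left null vector has entries proportional to $n_j / J_{(j)}$ (where $J_{(j)}$ is the common value of $J_i$ on cluster $S_j$ — the damping/inertia homogeneity and the fact that $\tilde A$ symmetric plus $P\gamma=\gamma$ force $J_i$ constant on orbits). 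Then $\tilde z^T \tilde{\mathbf p} = \sum_j (n_j/J_{(j)}) \tilde p_j$, and since $\tilde p_j$ equals the common $p_i = J_{(j)} b_i$ on the cluster, this is $\sum_j n_j b_{(j)} = \sum_i b_i = 0$ by the balance assumption. The one subtlety to get right — and the place I'd be most careful — is the bookkeeping of the $(E^TE)^{-1}$ and $J_i$ weights, i.e.\ confirming that the left null vector of the (generically asymmetric) matrix $\tilde L$ is really the reciprocal-weight vector and not simply $[1,\dots,1]$; everything else is a direct transcription of the $q_1=0$ argument to the quotient system.
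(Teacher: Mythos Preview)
Your proposal is correct and follows the same overall strategy as the paper: identify the left null vector $\tilde z$ of $\tilde L$ and show $\tilde z^{T}\tilde{\mathbf p}=0$. The difference lies in how the left null vector is obtained. You factor $\tilde L=\tilde D\,\check L$ with $\check L$ symmetric and invoke the analogue of Lemma~\ref{diag}, carefully tracking the $J_i$ weights to get entries proportional to $n_j/J_{(j)}$, and then reduce $\tilde z^{T}\tilde{\mathbf p}$ to $\sum_i b_i=0$. The paper instead writes $\tilde L=(E^{T}E)^{-1}(E^{T}LE)$, argues that $E^{T}LE$ is symmetric so that its left and right null vectors coincide (both equal to $[1,\dots,1]$), and reads off $\mathbf w^{T}=[n_1,\dots,n_q]$, concluding $\mathbf w^{T}\tilde{\mathbf p}=\sum_j n_j\tilde p_j=\sum_i p_i=0$. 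The paper's route is shorter and avoids your $J_{(j)}$ bookkeeping, but it tacitly uses the symmetry of $E^{T}LE$ (and ends with $\sum_i p_i=0$ rather than $\sum_i b_i=0$); your more careful tracking of the inertia weights handles exactly the asymmetry concern you flag at the end, and the two arguments agree whenever the $J_i$ are uniform.
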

\begin{proof}
Consider the equation:
\begin{equation}
    \mathbf{w^T}\tilde{\mathbf{p}}=0 
\end{equation}

Let $\mathbf{v}$ be a right eigenvector and $\mathbf{w^T}$ be a left eigenvector of $\tilde{L}$. We know the right eigenvector associated with the zero eigenvalue is $\mathbf{v}=[1 1...1]$. Which implies, \\
\begin{equation*}
    \tilde{L}\mathbf{v}=0
    \end{equation*}
    
    \begin{equation*}
(E^TE)^{-1}E^TLE\mathbf{v}=0
\end{equation*}

By left multiplying both sides of the above equation by $(E^TE)$, we obtain,\\
\begin{equation*}
E^T L E\mathbf{v}=0    
\end{equation*}
Indicating that the matrix $E^TLE$ must have $\mathbf{v}=[1 1...1]$ as its right eigenvector associated with the zero eigenvalue. A left eigenvector $\mathbf{w}^T$ associated with zero eigenvalue has to satisfy the equation:
\begin{equation}
 \mathbf{w^T}(E^TE)^{-1}E^TLE=0
\end{equation}
This also means $\mathbf{w^T}(E^TE)^{-1}$ is a left eigenvector with associated eigenvalue 0 for $(E^TLE)$ which is a symmetric matrix. Because $E^TLE$ is symmetric, its left eigenvectors and right eigenvectors, corresponding to the same eigenvalues, are the same.\\
Ergo,\\
\begin{equation}
  \mathbf{w^T}(E^TE)^{-1}=\mathbf{v^T}    
\end{equation}
Recall that
$\mathbf{v^T}=[1 1....1]$. Moreover, the matrix
$(E^TE)$ is a diagonal matrix that has the number of nodes in each cluster $n_1, n_2,$ etc on the main diagonal. It follows that 
$\mathbf{w^T}=[n_1 n_2 ...]$.
Since the entries of the vector $\mathbf{w^T}$ are the number of nodes in each cluster, the product $\mathbf{w^T} \tilde{\mathbf{p}}$ is equal to the sum of all the powers in the full network which is zero by the assumption that the full network is balanced.
\end{proof}


In order to demonstrate this quotient transformation, we will use the $n=7$ node graph with $6$ edges shown in Fig. \ref{fig:cubic}, for which
\renewcommand{\path}{AllSetup/yNetwork}

\begin{figure}
   \includegraphics[width=2.6in,center]{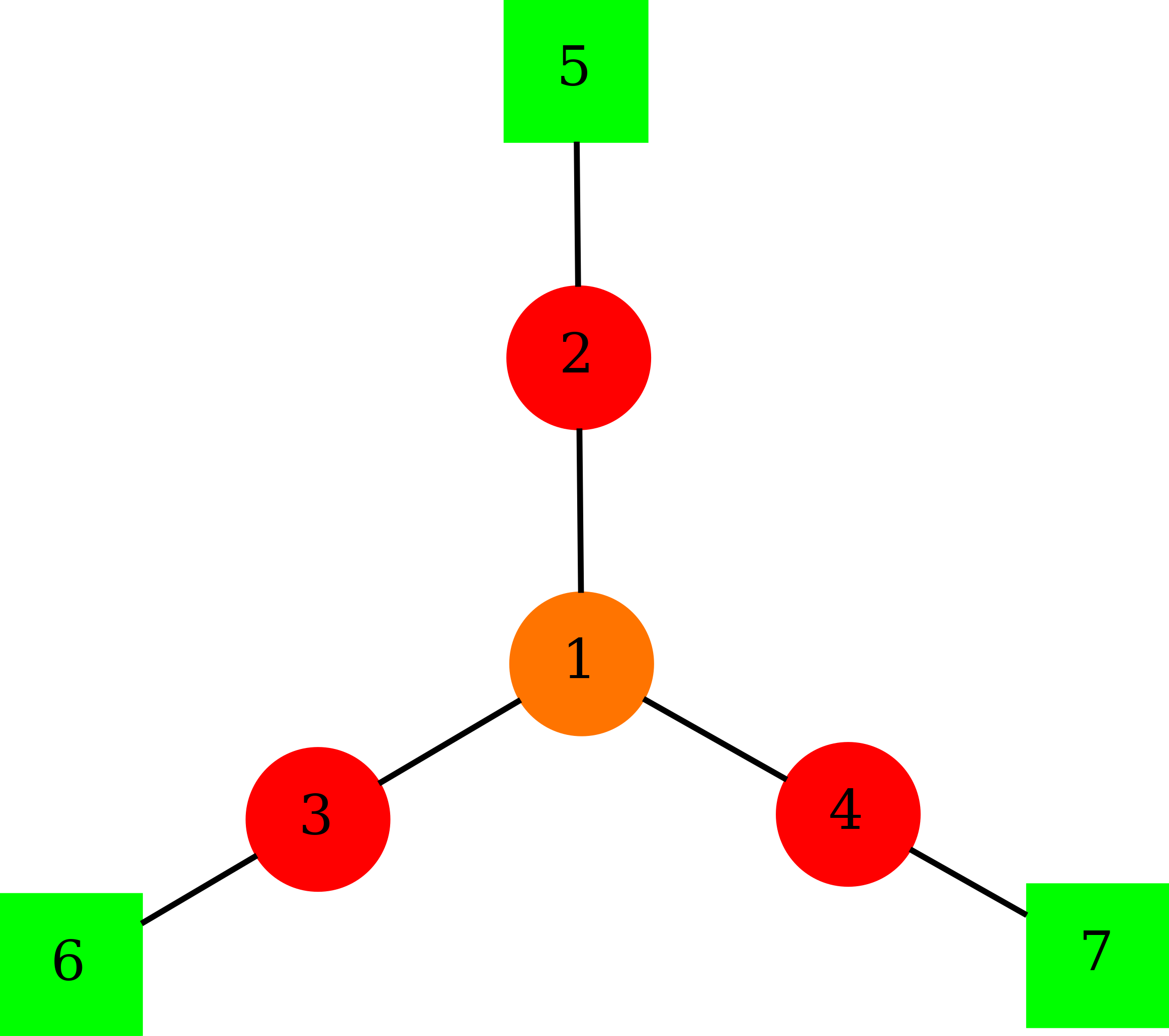}
    \caption{7-node network. Generator nodes are indicated with squares and motor nodes with circles. All nodes corresponding to the same cluster or orbit of the symmetry group are colored the same.}
    \label{fig:cubic}
\end{figure}

\begin{center}
$L$=$\begin{bmatrix} 
-3&	1&	1&	1&	0&	0&	0\\
1&	-2&	0&	0&	1&	0&	0\\
1&	0&	-2&	0&	0&	1&	0\\
1&	0&	0&	-2&	0&	0&	1\\
0&	1&	0&	0&	-1&	0&	0\\
0&	0&	1&	0&	0&	-1&	0\\
0&	0&	0&	1&	0&	0&	-1 
\end{bmatrix}$ 
\end{center}
     with eigenvalues $0,
    -0.38,
    -0.38,
    -1.58,
    -2.61,
    -2.61,
    -4.41$. The power vector $\mathbf{p}$ and the transformed power vector $\mathbf{q}$ are also shown, 
\begin{equation*}
    \begin{tabular}{c c}
     


        $\mathbf{p}$= $\begin{bmatrix}
       -0.3\\
-0.3\\
-0.3\\
-0.3\\
0.4\\
0.4\\
0.4
        \end{bmatrix}$&

        $\mathbf{q}$= $\begin{bmatrix}
   0.00\\
    0.00\\
   0.00\\
   -0.8895\\
    0.00\\
    0.00\\
    0.2208
        \end{bmatrix}$

        \end{tabular}
\end{equation*}
 As can be seen the power vector $\mathbf{p}$ respects the symmetries. 
\newline

Figure \ref{flotime} shows the flows vs. time over all the network lines (forced evolution). Due to the network symmetries, several flows are superimposed on top of each other (and thus indistinguishable). 
\begin{figure}
    \includegraphics[width=3.45in]{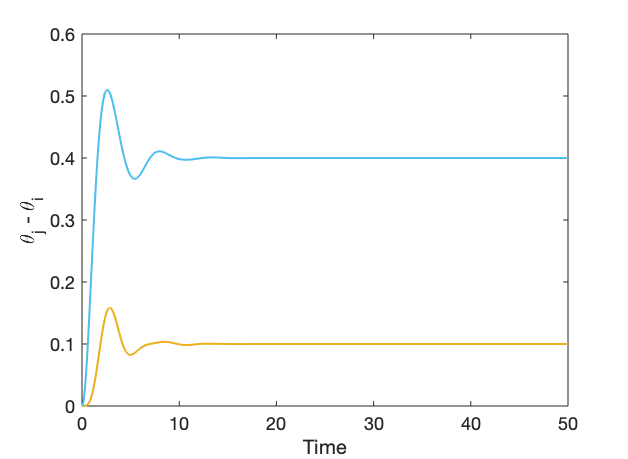}
    \caption{{Flows vs. time for network in Fig.\ \ref{fig:cubic}. The network has a total of 6 lines. 
    The forced evolution of the flows over each one of these lines is plotted. However, due to the symmetries, triplets of flows are superimposed on top of each other.}}
    \label{flotime}
\end{figure}

 For the network in Fig.\ 2, $S_1=\{1\}$, $S_2=\{2,3,4\}$, $S_3=\{5,6,7\}$. This corresponds to the following indicator matrix:
\begin{center}
    \begin{tabular}{c c c}
        E= $\begin{bmatrix}
      1&	0&	0\\
0&	1&	0\\
0&	1&	0\\
0&	1&	0\\
0&	0&	1\\
0&	0&	1\\
0&	0&	1
    \end{bmatrix}$&
\end{tabular}
\end{center}
The  Laplacian matrix $\tilde{L}$ for the quotient network is,

\begin{figure}
    \includegraphics[width=3.5in,center]{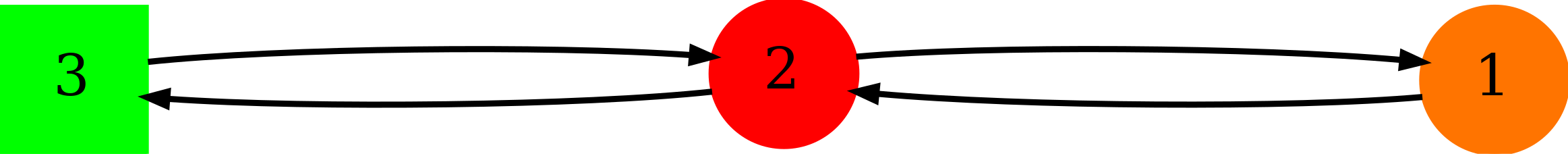}
    \caption{Quotient Network corresponding to the network in Fig. \ref{fig:cubic}.}
    \label{fig:my_label}
\end{figure}

\begin{center}
    \begin{tabular}{c}
        
         $\tilde{L}$= $\begin{bmatrix}
           -3&    3&     0\\
    1&     -2&    1\\
     0&    1&     -1
        \end{bmatrix}$. \\
        
        \end{tabular}
        \end{center}
     with eigenvalues $0, -1.58, -4.41$. Moreover,
       
      \begin{center}
    \begin{tabular}{c}
        $\tilde{V}_{r}$= $\begin{bmatrix}
       0.898&	0.731&	0.577\\
    -0.423&	0.345&	0.577\\
    0.124&	-0.589&	0.577
        \end{bmatrix}$
        \end{tabular}\vspace{0.1in}
       
       \begin{tabular}{c}
          $\tilde{V}_{l}$= $\begin{bmatrix}
     0.562&   0.337&    0.229\\
    -0.794&   0.476&    0.688\\
   0.233&    -0.813&    0.688
        \end{bmatrix}$\\
        \end{tabular}
        \end{center}

\begin{center}
    \begin{tabular}{c c}

        $\tilde{\mathbf{p}}$= $\begin{bmatrix}
        -0.3\\
-0.3\\
0.4
 \end{bmatrix}$&

        $\tilde{\mathbf{q}}$= $\begin{bmatrix}
       0.1873\\
-0.6402\\
0
        \end{bmatrix}$,
        \end{tabular}
\end{center}
where
 $\tilde{V}_{r}$ and $\tilde{V}_{l}$ represent the matrix of right and the left eigenvectors of $\tilde{L}$ respectively. The vectors $\tilde{\mathbf{p}}$ and $\tilde{\mathbf{q}}$ are the power vector and the transformed power vector of the quotient network respectively.\\
As seen in the Fig. \ref{fig:my_label}, the quotient network has three nodes (each representing a cluster) and two lines. Unlike the full network, this network is directed and weighted. {Under the assumption that the vector $\textbf{p}$ respects the symmetries,} a study of the forced response of the quotient network provides complete information about the forced response of the full network.
If we plot the flows of the quotient network,  we still get Fig. \ref{flotime} but with only one curve representing all the nodes in a cluster.

\subsection{The {case that} the power vector $\mathbf{p}$ does not respect the symmetries}

When the power vector does not respect the symmetries, we can still reduce the dynamics in the quotient network form \eqref{quot}. In this case
the power for each node of the quotient network will be the average of the powers of the nodes in each cluster and the forced evolution of the quotient network is the average of the forced evolutions of the full network within each cluster. 

As an example, consider the network in Fig.\ 2 studied previously but with power vector,
 
\begin{center}

        $\mathbf{p}$= $\begin{bmatrix}
        -0.3\\
-0.3\\
-0.4\\
-0.3\\
0.5\\
0.4\\
0.4
\end{bmatrix}$
\end{center}
which does not respect the symmetries. The forced 
evolutions for the full network and for the quotient network are given in Figs. \ref{pnores}(a) and \ref{pnores}(b). The color code consistent with color of nodes in Figs.\ \ref{fig:cubic} and \ref{fig:my_label}. All the green (red) curves in Fig.\ \ref{pnores}(a) average to the one green (red) curve in Fig.\ \ref{pnores}(b).


\begin{figure}
    \includegraphics[width=3.8in]{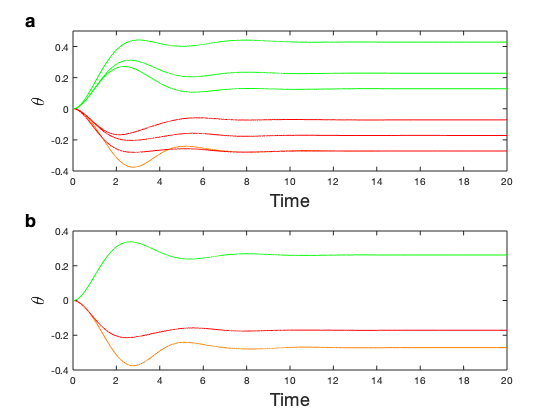}
    \caption{{\textbf{a}:$\theta$ vs. time for the network in Fig.\ \ref{fig:cubic}, \textbf{b}: $\theta$ vs. time for quotient network in Fig.\ \ref{fig:my_label}. Color code consistent with color of nodes in Figs.\ \ref{fig:cubic} and \ref{fig:my_label}.}}
    \label{pnores}
\end{figure}

An important observation is that the quotient network removes redundant modes but these modes might still produce overshoots in the full network which will be ignored by the quotient network. 

{Since the forced evolution obtained from the quotient network is not the same as that of the full network, a relevant problem is to characterize deviations between the time evolutions of the nodes in the same cluster and the time evolution of the associated quotient node.  
\begin{definition}
Define the deviations as $\epsilon_{i}(t)=(\theta_i(t)-\tilde{\theta}_{{i*}}(t))$, $i=1,2,...,n$, where ${i*}$ is the cluster node $i$ belongs to, $\sum_{i \in S_j} \epsilon_{i}(t)=0$.
\end{definition}
The dynamics of the deviations is provided by the redundant modes only. Without loss of generality assume the redundant modes \eqref{gen} are labeled $i=1,...,(n-q).$ Then
\begin{equation} \label{error}
    \epsilon_i(t)=\sum_{j=1}^{n-q} V_{ij} \eta_j(t).
\end{equation}
The modal decomposition of the forced response Eq.\ \eqref{error} can be used to compute the transient deviation overshoot dynamics, peak times, upper bounds, etc. We can also approximate the peaks using the numerical approach of Sec.\ IIA.
\begin{lemma}
In the case the power vector $\textbf p$ respects the symmetries, the redundant modes \eqref{gen} have $q_i=0$, hence their forced response is zero.
\end{lemma}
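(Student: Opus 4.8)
\emph{Proof proposal.} The plan is to read $q_i$ as the coordinate of $\mathbf p$ along the $i$-th right eigenvector of $L$, and to show that when $\mathbf p$ respects the symmetries this coordinate vanishes for every redundant mode. First I would rewrite $\mathbf q = V^{-1}\mathbf p$ in the form $\mathbf p = V\mathbf q = \sum_{i=1}^n q_i\,\mathbf v_i$, where $\mathbf v_i$ denotes the $i$-th column of $V$, i.e. the right eigenvector of $L$ associated with $\lambda_i$. Hence $q_i$ is exactly the coefficient of $\mathbf v_i$ in the eigenbasis expansion of $\mathbf p$, and the assertion ``$q_i=0$ for every redundant mode $i$'' is equivalent to ``$\mathbf p$ lies in the span of the non-redundant eigenvectors''.

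Next I would translate the two hypotheses into statements about $\operatorname{range}(E)$. By Definition \ref{rs}, $\mathbf p$ respecting the symmetries means $p_i=\tilde p_{j}$ for all $i\in S_j$, i.e. $\mathbf p$ is constant on each cluster, which is precisely $\mathbf p = E\tilde{\mathbf p}\in\operatorname{range}(E)$. By Definition \ref{redundant_non}, each non-redundant eigenvector $\pmb\nu$ is also constant on each cluster, so $\pmb\nu = E\tilde{\pmb\nu}$ for some $\tilde{\pmb\nu}\in\mathbb{R}^q$; hence the $q$ columns of $V_\nu$ all lie in $\operatorname{range}(E)$. Since $E^TE=\operatorname{diag}(n_1,\dots,n_q)$ is invertible, $E$ has full column rank and $\dim\operatorname{range}(E)=q$; and since $V$ is invertible its $q$ columns forming $V_\nu$ are linearly independent, so $\operatorname{span}(V_\nu)$ is a $q$-dimensional subspace of the $q$-dimensional space $\operatorname{range}(E)$, whence $\operatorname{span}(V_\nu)=\operatorname{range}(E)$. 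Combining, $\mathbf p\in\operatorname{range}(E)=\operatorname{span}(V_\nu)$, so in $\mathbf p=\sum_i q_i\mathbf v_i$ only the non-redundant terms survive and $q_i=0$ for every redundant mode $i$. (As a cross-check, the same follows via left eigenvectors: $q_i=\mathbf z_i^T\mathbf p=\mathbf z_i^T E\tilde{\mathbf p}$, and $\mathbf z_i^T E=0$ for redundant $i$ because $\mathbf z_i$ is dual to $\mathbf v_i$ and therefore orthogonal to $\operatorname{span}(V_\nu)=\operatorname{range}(E)$.)

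Finally, with $q_i=0$ the modal equation \eqref{gen} reduces to the homogeneous equation $\ddot\eta_i=-\gamma\dot\eta_i+\lambda_i\eta_i$, whose forced (particular) component is zero — equivalently, the forced solution \eqref{eta} vanishes since its amplitude $q_i/\omega_i^2$ is $0$ — leaving only the exponentially decaying free evolution; this is the claimed statement. The only delicate point is the tacit assumption, already in force in the paper via the count of $q$ non-redundant and $(n-q)$ redundant eigenvectors, that the columns of $V$ can be partitioned cleanly into redundant and non-redundant ones; granted that, the argument uses only invertibility of $V$ and of $E^TE$ and is otherwise elementary, so I do not expect a genuine obstacle.
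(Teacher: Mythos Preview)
Your proposal is correct and follows essentially the same approach as the paper, which simply states that the result follows trivially from Definitions~\ref{rs} and~\ref{redundant_non}. Your span/dimension-count argument is a careful elaboration of that one-line proof: both rest on $\mathbf p\in\operatorname{range}(E)$ (Definition~\ref{rs}) together with the redundant/non-redundant dichotomy of Definition~\ref{redundant_non}, and your observation that $\operatorname{span}(V_\nu)=\operatorname{range}(E)$ is precisely what makes the paper's ``trivially'' rigorous without appealing to symmetry of $L$.
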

\begin{proof}
This follows trivially from Definitions \ref{rs} and \ref{redundant_non}.
\end{proof}}

{\subsection{Large Network Example}
We now consider a large network, shown in Figure \ref{Chilefull}, based off the Chilean power-grid topology with $n=218$ nodes, out of which $94$ are motors and $124$ are generators. This is the reduced version achieved by the Star-Mesh transformation \cite{Star_mesh_Bedrosian} of the original Chilean power-grid  \cite{Chile_Kim}.
\subsubsection{Powers {that} respect the symmetries}
We first choose the power vector such  that it respects the symmetries;  for simplicity we set all generator nodes to have a power of {0.0081} and all motor nodes to have a power of {-0.0106}. For this network, we have 29 non-trivial clusters; clusters with more than one node, and 96 trivial clusters; clusters with only one node.
Using this information, we can transform the full network into the quotient network shown in Figure \ref{Chilequot}.  After the transformation, each cluster becomes a node in the quotient network with 157 nodes, out of which 74 are generator nodes and 83 are motor nodes. As already discussed previously, if the vector $\textbf{p}$ respects the symmetries,  the forced response of the quotient network coincides the forced response of the full network {(not shown here for the sake of brevity.)}

\subsubsection{Powers {that} do not respect the symmetries} If the power vector does not respect the symmetries, the quotient network time evolution provides the average forced evolution of the nodes in each cluster of the full network. The deviation between the dynamics of the full network and the quotient network can be characterized using Eq.\ \eqref{error}. To demonstrate this we first choose a cluster where the nodes do not respect the symmetries. The cluster is chosen containing 7 nodes: 178, 181, 182, 182, 183, 184, 185, 207,  with associated powers {[0.0181, 0.0081, 0.0081, 0.0081, 0.0081, 0.0081, 0.0081].}} 
Now, using analogous equations to Eq. \eqref{sumsin} and \eqref{rsa}, we can obtain the maximum displacement deviation of each node $\epsilon_i$. Since we are finding the maximum of displacements rather than flows, we replace the coefficients $C_i$ with $V_i$ and get,
\begin{equation}\label{errorsum}
  {\sum_{k=1}^{n-q}} \frac{V_k q_k}{\varsigma_k} \sin{\varsigma_k t}=0.
\end{equation}
\begin{equation}\label{rsaerror}
    \tilde{\tau}_{ij}=\sqrt{\frac{6 \sum_{k=1}^{n-q} V_k q_k }{\sum_{k=1}^{n-q} V_k q_k \varsigma_k^2}}.
\end{equation}
Using the technique mentioned in Section IIA and Eq.\ \eqref{errorsum} and \eqref{rsaerror}, we can easily calculate maximum displacement deviation. The  time evolution of this deviation for our selected cluster is shown in Fig.\ \ref{E}.  

\begin{figure}
    \includegraphics[width=3.8in]{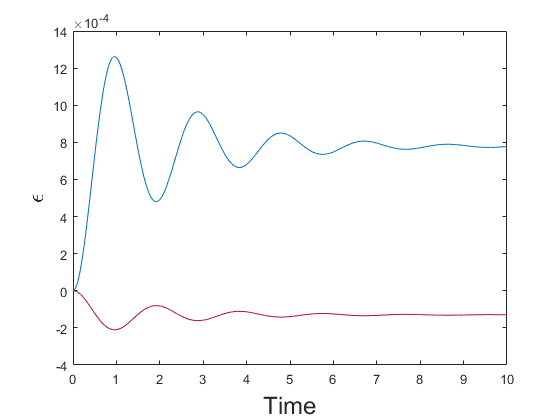}
    \caption{{Evolution of the displacement deviation of the quotient network node with respect to the original nodal displacements. The blue curve is for node 178, the red curve is for nodes 181, 182, 183, 184, 185, 207.}}
    \label{E}
\end{figure}

\begin{figure}
    \includegraphics[width=3.45in]{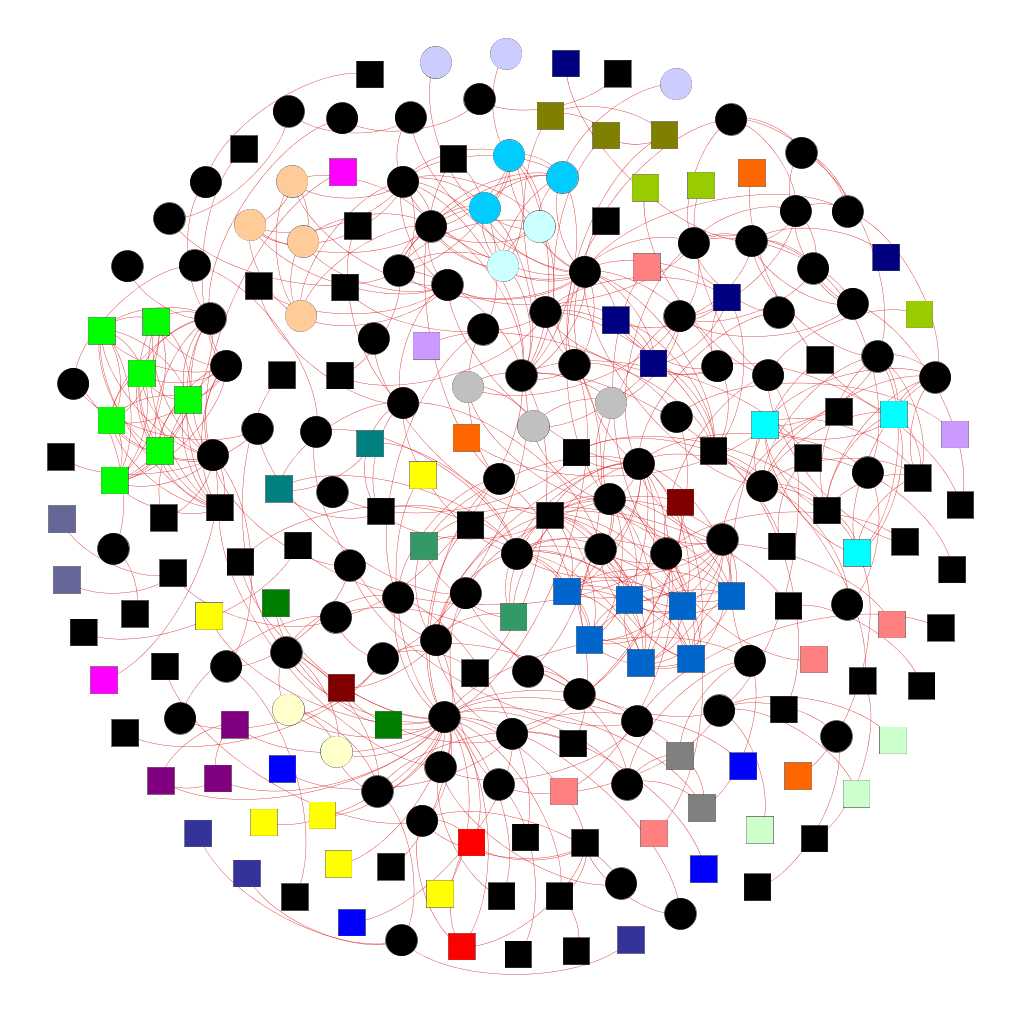}
    \caption{Chilean power-grid Network. Squares indicate generator nodes and circles indicate motor nodes. Black nodes are in trivial clusters whereas nodes with the same color (different from black) are in non-trivial clusters.}
    \label{Chilefull}
\end{figure}

\begin{figure}
    \includegraphics[width=3.45in]{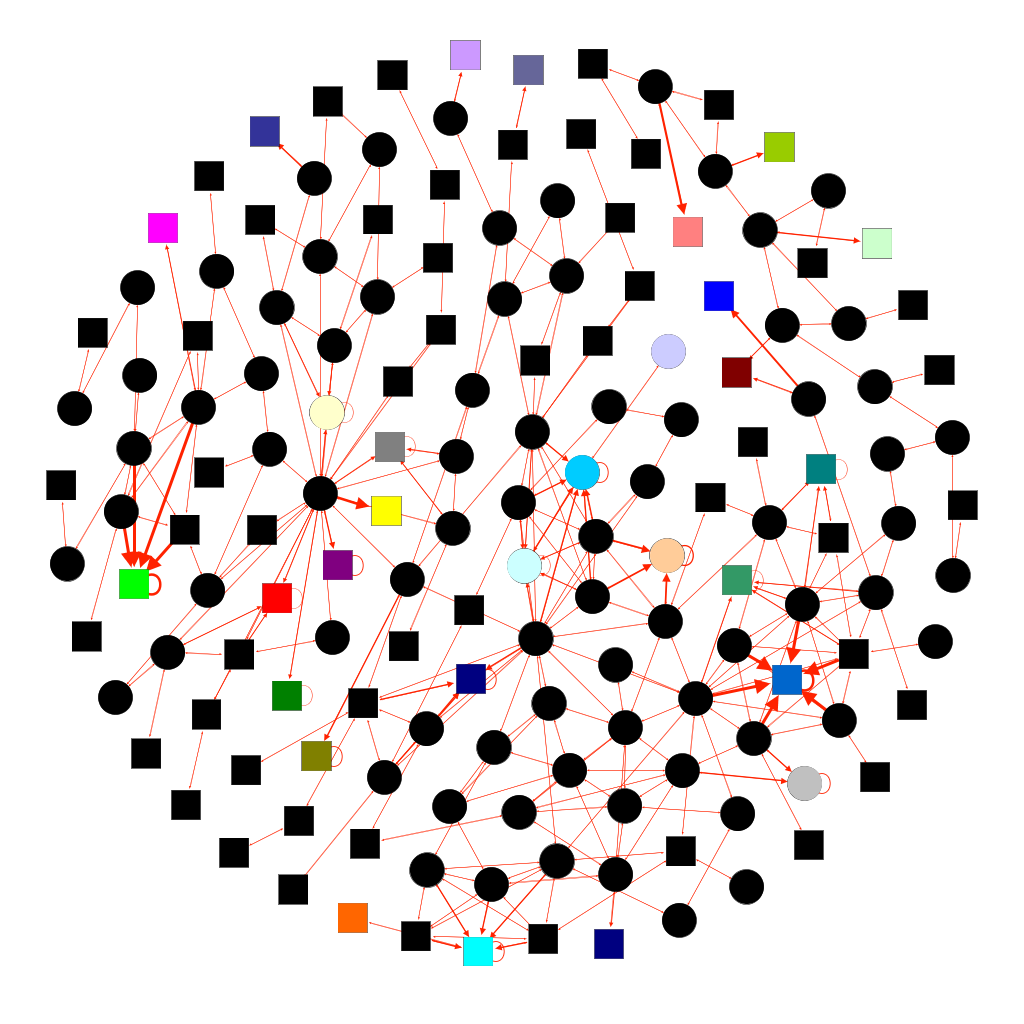}
    \caption{Quotient network for Chilean Power-grid topology. Squares indicate generator nodes and circles indicate motor nodes. Black nodes are in trivial clusters whereas nodes with the same color (different from black) are in non-trivial clusters. Thickness of the arrows are indicative of the weight of the edges.}
    \label{Chilequot}
\end{figure}

\section{Effects of heterogeneity in the $\gamma_i$ terms}

In this section we remove the assumption of homogeneous damping terms $\gamma_i=\gamma$. Note that the symmetries of the swing equation are defined in the generic case of heterogeneous (arbitrary) $\gamma$'s, see Definition \ref{defsym}. Here we explain how our previous derivations need to be modified in the case of heterogeneous $\gamma$'s.

Equation \eqref{main} is rewritten,
\begin{equation} \label{mainH}
    \ddot{\pmb{\theta}}(t) = -\Gamma \dot{\pmb{\theta}}(t) + L \pmb{\theta}(t) + \mathbf{p},
\end{equation}
where the $n$-dimensional diagonal matrix $\Gamma=diag({\gamma}_1,{\gamma}_2,...,{\gamma}_n)$.
Analogously, the quotient network Equation \eqref{quot} becomes,
\begin{equation} \label{quotH}
    \ddot{\tilde{\pmb{ \theta}}}(t) = -\tilde{\Gamma} \dot{\tilde{\pmb{ \theta}}}(t) + \tilde{L} \tilde{\pmb{ \theta}}(t) + \tilde{\mathbf{ p}}, 
\end{equation}
where the $q$-dimensional diagonal matrix $\tilde{\Gamma}=diag(\tilde{\gamma}_1,\tilde{\gamma}_2,...,\tilde{\gamma}_q)$ and $\tilde{\gamma}_i$ is the damping ratio of all the nodes in cluster $S_i$.



A separate consideration is needed for the study of the deviations in the case that (i) the $\gamma$'s are heterogeneous and (ii) the power vector $\textbf p$ does not respect the symmetries. To deal with such a case, we introduce the irreducible representations of the symmetry group $\mathcal{G}$ \cite{NC}.

From knowledge of the group of symmetries $\mathcal{G}$, we can compute the irreducible representations (IRRs) of the symmetry group of the network. This defines a transformation $T$ into the so-called IRR coordinate system (see Ref. \cite{NC}). The transformation matrix $T$ is orthogonal.  Each one of the rows of the matrix $T$ is associated with a specific cluster. If a row of the matrix $T$ is associated with cluster $k$, it means all the $i$ entries of that row are zero for $i$ not in cluster $S_k$. The first $q$ rows of the matrix $T$ are parallel to the $q$ nonredundant eigenvectors $\pmb \nu$.

Therefore, the matrix $T$ can be cast in the following block diagonal form,
\begin{equation}
T=\bigoplus_{k=1,..,q} T_k,
\end{equation}
where each block $T_k$ is an $n_k$-dimensional square matrix associated with cluster $k$.
Premultiplying Eq.\ \eqref{mainH} by $T$, we obtain,
\begin{equation} \label{mainB}
    \ddot{\pmb{\xi}}(t) = -\Gamma \dot{\pmb{\xi}}(t) + B \pmb{\xi}(t) + \mathbf{r},
\end{equation}
where $\pmb{\xi}= T \pmb{\theta}$, $\mathbf{r}= T \mathbf{p}$, the block -diagonal matrix $B=T L T^T$ and the matrix $\Gamma$ is unaltered by application of the matrix $T$. The latter property follows from the block-diagonal structure of the matrix $T$ and the observation that $\Gamma$ can be recast in a similar form 
\begin{equation}
    \Gamma=\bigoplus_{k=1,...,q} I_{n_k} \tilde{\gamma}_k,
\end{equation}
where $I_{n_k}$ is the $n_k$-dimensional identity matrix.

The transformed $n \times n$ block diagonal matrix $B= T A T^T$ is a direct sum $\oplus_{u=1}^U \hat{B}_u$, where $\hat{B}_u$ is a (generally complex) $p_u \times p_u$ matrix
with $p_u$ the multiplicity of the $u$th IRR in the permutation
representation, $U$ the number of IRRs present and $d_u$
the dimension of the $u$th IRR, so that
$\sum_u d_u p_u= n$.
The matrix $T$ contains information on which perturbations affecting different clusters get mapped  to different IRRs \cite{siddique2018symmetry}.

Due to the block-diagonal structure of the matrix $B$, Eq.\ \eqref{mainB} can be decoupled into a number of lower  dimensional equations, where each equation corresponds to a block of the matrix $B$.
 
 There is one representation (labeled $u=1$) which we call {trivial} and has dimension $d_1=q$. The associated block of the matrix $B$ corresponds to the dynamics of the quotient network. Hence, the trivial representation is associated with all the clusters. However, it is possible that other IRR representations are only associated with some of the clusters (not all of them.) Each one of these other representations $u>1$ describes the deviation dynamics of either an isolated cluster or a group of intertwined clusters \cite{NC}. A simple interpretation of isolated vs. intertwined clusters is the following. If a cluster is isolated a perturbation affecting the power of any one of its nodes will not affect the deviation dynamics of other clusters. On the contrary, when a set of two or more clusters are intertwined,  a perturbation affecting the power of any of the nodes in a cluster will affect the deviation dynamics of the remaining clusters in the set.

\subsection*{Example A} For this example, we consider the dynamics of  Eq.\ \eqref{lse}, with adjacency matrix corresponding to the network in Fig.\ \ref{fig:cubic}, the power vector {$\mathbf{p}=[-0.3,-0.3,-0.3,-0.4,0.4,0.4,0.5]^T$} and the matrix $\Gamma=diag(3,2,1,1,2,1,1)$.
{By the definition of symmetries (Definition 3), nodes belonging to the same cluster have the same $\gamma$ value.} In this case, due to the presence of heterogeneity in $\gamma$, there are two non-trivial clusters, each containing 2 nodes: $\{3,4\}$ and $\{6,7\}$. All other nodes are in trivial clusters. Moreover, the power vector $\mathbf{p}$ does not respect the symmetries.

We compute the $IRRs$ of the symmetry group. The transformation matrix $T$ is equal to, 
\small
\begin{center}
    \begin{tabular}{c}
        $T$= $\begin{bmatrix}
 1&	0&	0&	0&	0&	0&	0\\
0&	1&	0&	0&	0&	0&	0\\
0&	0&	\frac{1}{\sqrt{2}}&	\frac{1}{\sqrt{2}}&	0 &	0 &	0\\
0&	0&	0&	0&	1&	0&	0\\
0&	0&	0&	0&	0&	\frac{1}{\sqrt{2}}&	\frac{1}{\sqrt{2}}\\
0&	0&	\frac{1}{\sqrt{2}}&	-\frac{1}{\sqrt{2}}&	0&	0&	0\\
0&	0&	0&	0&	0&	\frac{1}{\sqrt{2}}&	-\frac{1}{\sqrt{2}}&
       \end{bmatrix}$
       \end{tabular}
       \end{center}\normalsize
       With this information, we can calculate the block-diagonal matrix $B=TLT^T$, ${\textbf r}=T{\textbf p}$, and $\Gamma_{orth}=T \Gamma T^T$,
       \small
       \begin{center}
       \[ 
         B = \left[\begin{array}{@{}ccccc|cc@{}}
         
         -3&   1&   \sqrt{2}&         0&         0&     0&         0\\
   1&    -2&         0&   1&         0&         0&         0\\
   \sqrt{2}&         0&   -2&         0&   1&    0&         0\\
         0&   1&         0&    -1&         0&         0&         0\\
         0&         0&   1&         0&    -1&   0&         0\\ \hline
         0&         0&    0&         0&   0&    -{2}&    {1}\\
         0&         0&   0&         0&    0&    {1}&    -{1} 
        \end{array}.\right]
        \]

          $\mathbf{r}$= $\begin{bmatrix}
-0.3&
   -0.3&
   -0.495&
    0.4&
    0.636& \vline
     {1/10\sqrt{2}}&
    {-1/10\sqrt{2}}
        \end{bmatrix}^T$. \\

          \[
             \Gamma_{orth}= \left[\begin{array}{@{}ccccc|cc@{}}
       3&        0&         0&         0&         0&         0&         0\\
         0&    2&        0&         0&         0&         0&         0\\
         0&         0&    1&      0&         0&    0&       0\\
         0&         0&         0&    2&     0&         0&         0\\
         0&         0&         0&         0&    1&       0&         0\\ \hline
         0&         0&         0&         0&         0&          {1}&       0\\
         0&         0&         0&         0&    0&     0&     {1}\\
        \end{array}. \right]
        \]
        \end{center}
        \normalsize

The transverse  (bottom right) block of the matrix $B$ represents the deviation dynamics for the two non-trivial clusters, which are intertwined. Note that the elements of $\mathbf r$ relative to this block are given by $r_{6}=\frac{(p_3 - r_4)}{\sqrt{2}}$ and $r_{7}=\frac{(p_6 - r_7)}{\sqrt{2}}$. We also note that $\Gamma_{orth}$ for the transverse block is the same as $\Gamma$ as explained in the theory. 
After diagonalizing the transverse block, its dynamics decouples into two independent modes,
{
\begin{equation} \label{kappa}    \ddot{\pmb{\kappa}}=-I\dot{\pmb{\kappa}}+\begin{bmatrix}
   -0.3820&   0\\
         0&    -2.6180
       \end{bmatrix}{\pmb{\kappa}}+\begin{bmatrix}
      0.02298\\
     -0.09732
       \end{bmatrix}
\end{equation}}


       


Now, using the technique derived in Section IIA, we can use Eq.\ \eqref{kappa} to calculate the maximum deviations from the quotient network. We can find the initial guess analogously  to Eq.\ \eqref{rsaerror}.
 $\tilde{\tau}$ can then be used to find the time of maximum error by solving the following equation for $t$ until convergence: \\
\begin{equation}\label{time_err}
{\sum_{k=m+1}^n} \frac{V_k r_k}{\varsigma_k} \sin{\varsigma_k t}=0.
\end{equation}
The obtained time $t$ can then be used to solve \eqref{lincomb} which gives us the deviations in each non-trivial cluster as seen in Table \ref{tab:error}. {{Table \ref{tab:error} also shows that for this example our approach based on the linearized swing equation (5) well approximates the maximum error obtained by integration of the full nonlinear swing equation (3).}}

\begin{table}
\small
\caption{\label{tab:error} {Steady state, first peak and maximum errors for the two non-trivial clusters. The values for the second column are obtained using Eq. \eqref{steadystate} and the values for the third column are obtained using Eq. \eqref{rsaerror}, Eq. \eqref{time_err} and Eq. \eqref{lincomb}. {The fourth and fifth columns are calculated by numerically solving the linear swing equation (5) and the non-linear swing equation (3), respectively.}}}
\begin{center}
    \begin{tabular} {|l|l|l|l|l|}
    \hline
           &  &  & Linear & Non-Linear\\\hline
          Deviation&	                    Steady&  First&	Max&Max \\ 
          &	    State&  Peak&  Error& Error\\ \hline
          $\frac{\theta_4-\theta_3}{\sqrt{2}}$&  0&  -0.0314&  -0.0314& -0.0323 \\ \hline
          $\frac{\theta_7-\theta_6}{\sqrt{2}}$&  0.0707& 0.0707&   0.0713 & 0.0792\\ \hline
           \end{tabular}
          
           \end{center}
           \end{table}
\normalsize
\subsection*{Example B}  For this example, we consider the Chilean power-grid described in Section IIIC and pictured in Fig.\ \ref{Chilefull}. The power vector is chosen so that it does not respect the symmetries. 
The coefficients $\gamma_i$ are chosen such that all nodes have $\gamma_i=1$, except for $\gamma_{116}=2$. For this network we computed the $IRR$ representations and the block-diagonal matrix $B$. This allowed us to study the deviation from the quotient network dynamics for each one of the $29$ nontrivial clusters. We did not find any intertwined clusters. Next, we focus on a cluster comprised of the three nodes $\{77,79,81\}$, represented as three cyan squares on the right side of Fig.\ 7. The transverse block corresponding to that cluster is reduced in a similar form to Eq.\ (41),
 {
 \begin{equation} \label{kappa1}    \ddot{\pmb{\kappa}}=-I\dot{\pmb{\kappa}}+\begin{bmatrix}
   -9.123&         0\\
    0.0000&   -0.876
       \end{bmatrix}{\pmb{\kappa}}+\begin{bmatrix}
     -0.00502\\
0.00643
       \end{bmatrix}
\end{equation}}
 
Similarly to Example A, Eq.\ (43) can be mapped to the deviation of the individual nodal displacement from the quotient displacement.  This can be done by computing a linear combination of the uncoupled modes (43) and applying the technique developed in Sec.\ IIA. Results of this computation are presented in Table \ref{tab:error2}, which shows our ability to accurately predict the maximum deviations for all three nodes $\{77,79,81\}$ in the cluster.

\begin{table}
 \caption{\label{tab:error2} {Steady state, first peak and maximum errors for the cluster containing nodes $\{77,79,81\}$. The values for the second column are obtained using Eq. \eqref{steadystate} and the values for the third column are obtained using Eq. \eqref{rsaerror}, Eq. \eqref{time_err} and Eq. \eqref{lincomb}. The fourth and fifth columns are calculated by numerically integrating the linear and non linear ODE respectively. $\theta_q$ refers to the displacement of the quotient node which is given by $\theta_q=\frac{\theta_{77}+\theta_{79}+\theta_{81}}{3}$} } 
 
 \begin{center}
 \small
 \begin{tabular} {|l|l|l|l|l|}
    \hline
    & & & Linear & Non-Linear \\ \hline
          Deviation&	                    Steady&  First&	Max & Max \\ 
          &	    State&  Peak&  Error & Error\\ \hline
          $\theta_{77}-\theta_q$& 0.00083&  0.001181& 0.01175 &0.001182 \\ \hline
         $\theta_{79}-\theta_q$&  -0.000416& -0.000591&   -0.000588 & -0.000577\\ \hline 
           $\theta_{81}-\theta_q$&  -0.000416& -0.000591& -0.000588 & -0.000577  \\ \hline 
           \end{tabular}
          
           \end{center}
           \end{table}
       
\normalsize

\section{Conclusions}

{In this paper we have studied how the presence of network symmetries affects the swing equation dynamics. We have first introduced the nonlinear swing equation and then modeled the propagation of small perturbations via the linearized swing equation. We have then shown that the nonlinear and the linear swing equation have the same symmetries. These symmetries can be reduced to provide an essential description of the dynamics in terms of a 'quotient network'.\\
We have  introduced a decomposition of the linear swing equation dynamics into independent modes, each of which corresponds to a forced second order system. This allowed us to characterize several transient effects such as overshoots, peaks, peak times, upper bounds etc. We have classified the symmetries into two cases based on whether the power vector respects the symmetries or not and presented techniques to obtain a full characterization of the transient dynamics and/or error quantification.\\  
In the case in which the power demanded and supplied by different nodes respects the symmetries, the quotient network completely describes the forced evolution of the full network, otherwise it provides the forced evolution {averaged over the nodes in each cluster}. We have also introduced the error dynamics which describes how the quotient network time evolutions deviate from the time evolutions of the individual nodes inside the clusters. This error dynamics can be written as a linear combination of the `redundant modes'.\\
Finally, we have presented how the symmetry analysis can be applied to networks with heterogeneous $\gamma$. In order to study the \textit{deviation dynamics} in the case that the $\gamma$'s are arbitrary and the power vector does not respect the symmetries, we have introduced the irreducible representation of the symmetry group. Furthermore, though not discussed in the paper, it is possible to apply our techniques to the case of different types of forcing terms, such as sinusoidal forcing; they can also be generalized to other network models, such as the effective network model (EN) or the structure preserving model (SP) \cite{powersync_latora}.}

\section*{Acknowledgement}
This work was supported by the National Science Foundation through grant (Grant No.
1727948.) The authors thank Cesar Ornelas for collaborating on an earlier version of this paper, Aranya Chakraborrti  for insightful discussions and Matteo Lodi for helping with the calculations of the irreducible representations.

\begin{IEEEbiography}[{\includegraphics[width=1in,height=1.25in,clip,keepaspectratio]{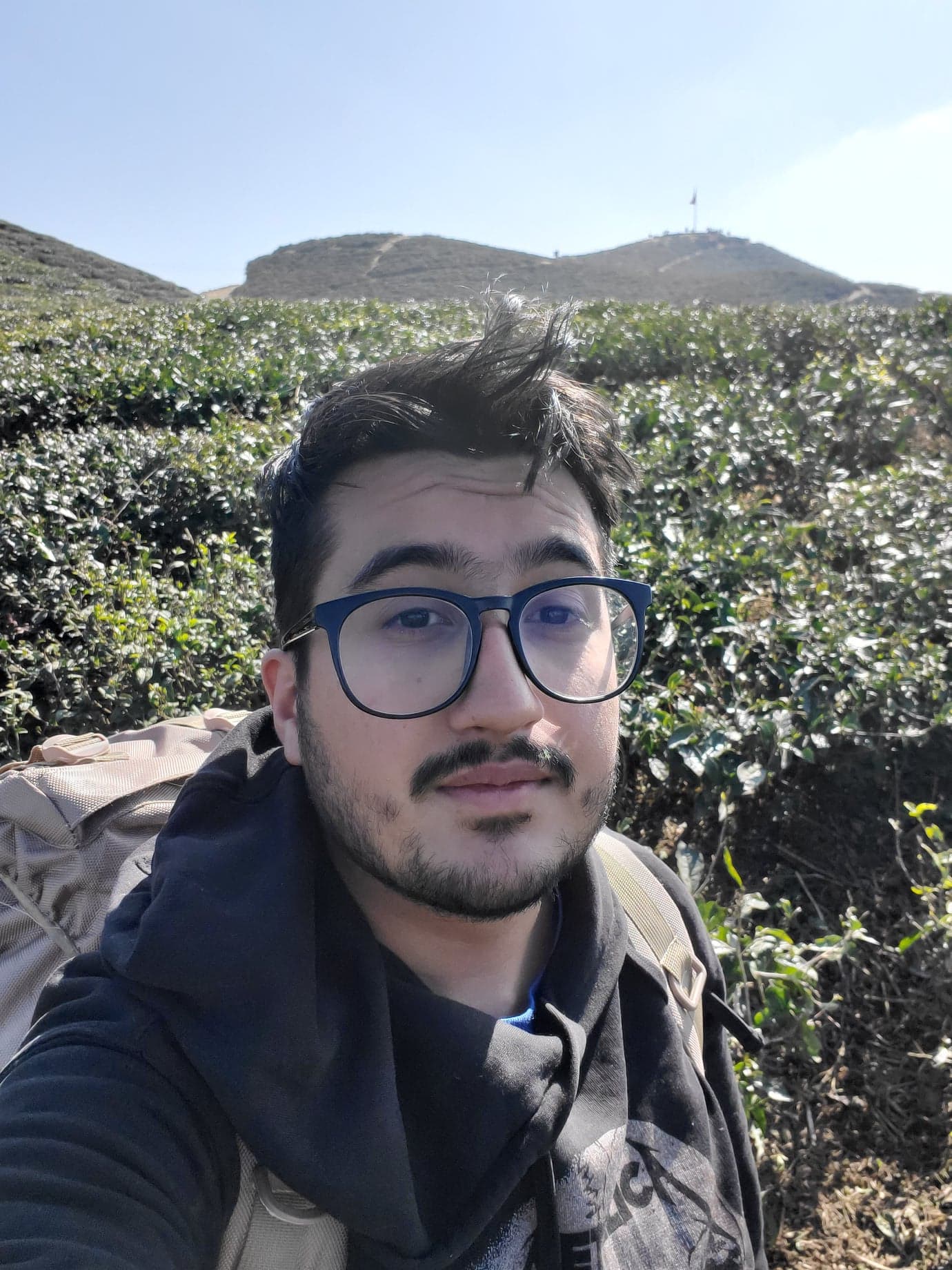}}]{Kshitij Bhatta}
 received a bachelor’s degree
in Mechanical Engineering in 2020 from the University
of New Mexico (US). He is currently a Master's student in Mechanical Engineering at the University of New Mexico. His research interests include non-linear dynamic systems, analysis of complex networks and control systems with feedback in robotic and automotive applications.     
\end{IEEEbiography}


\begin{IEEEbiography}[{\includegraphics[width=1in,height=1.25in,clip,keepaspectratio]{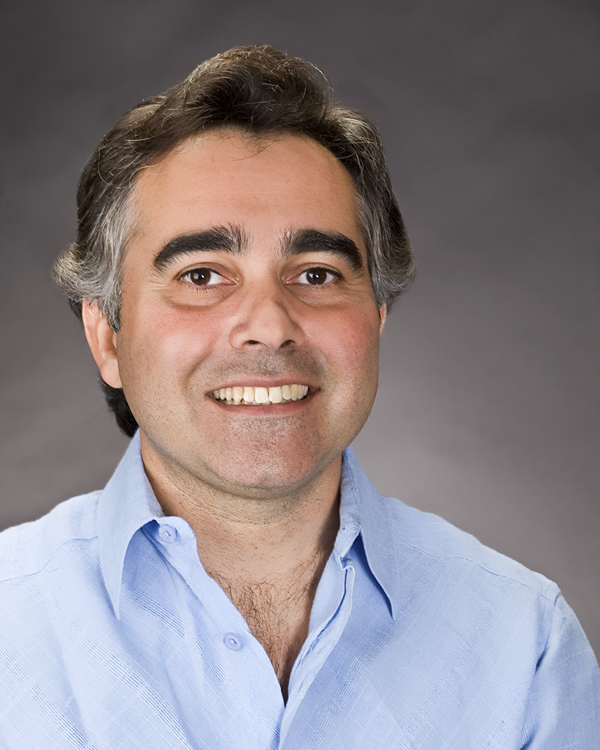}}]
{Majeed M. Hayat} received the M.S. and the Ph.D. degrees in Electrical and Computer Engineering from the University of Wisconsin-Madison in 1988 and 1992, respectively. He is currently a Professor and Department Chair of Electrical and Computer Engineering at Marquette University, Milwaukee, WI, USA. He was Associate Editor of Optics Express (Photodetectors and Image Processing) from 2004 to 2010 and Associate Editor and member of the Conference Editorial Board for the IEEE Control Systems Society. From 2010 to 2013 he was the Chair of the topical committee on Photodetectors, Sensors, Systems, and Imaging of the IEEE Photonics Society. From 2014 to 2018 he served as Associate Editor for the IEEE Transactions on Parallel and Distributed Systems. His research activities cover a broad range of topics including resilience and reliability of interdependent cyber-physical systems, dynamical modeling of cascading phenomena with applications to power systems, avalanche photodiodes, statistical communication theory, signal and image processing, algorithms for spectral and radar sensing and imaging, optical communication, and networked computing. He is a recipient of the National Science Foundation Early Faculty Career Award (1998). Dr. Hayat has authored or co-authored over 108 peer-reviewed journal articles and 134 conference papers (over 5,500 citations, H-Index: 37), and has fourteen issued patents, six of which have been licensed.
\end{IEEEbiography}

\begin{IEEEbiography}[{\includegraphics[width=1in,height=1.25in,clip,keepaspectratio]{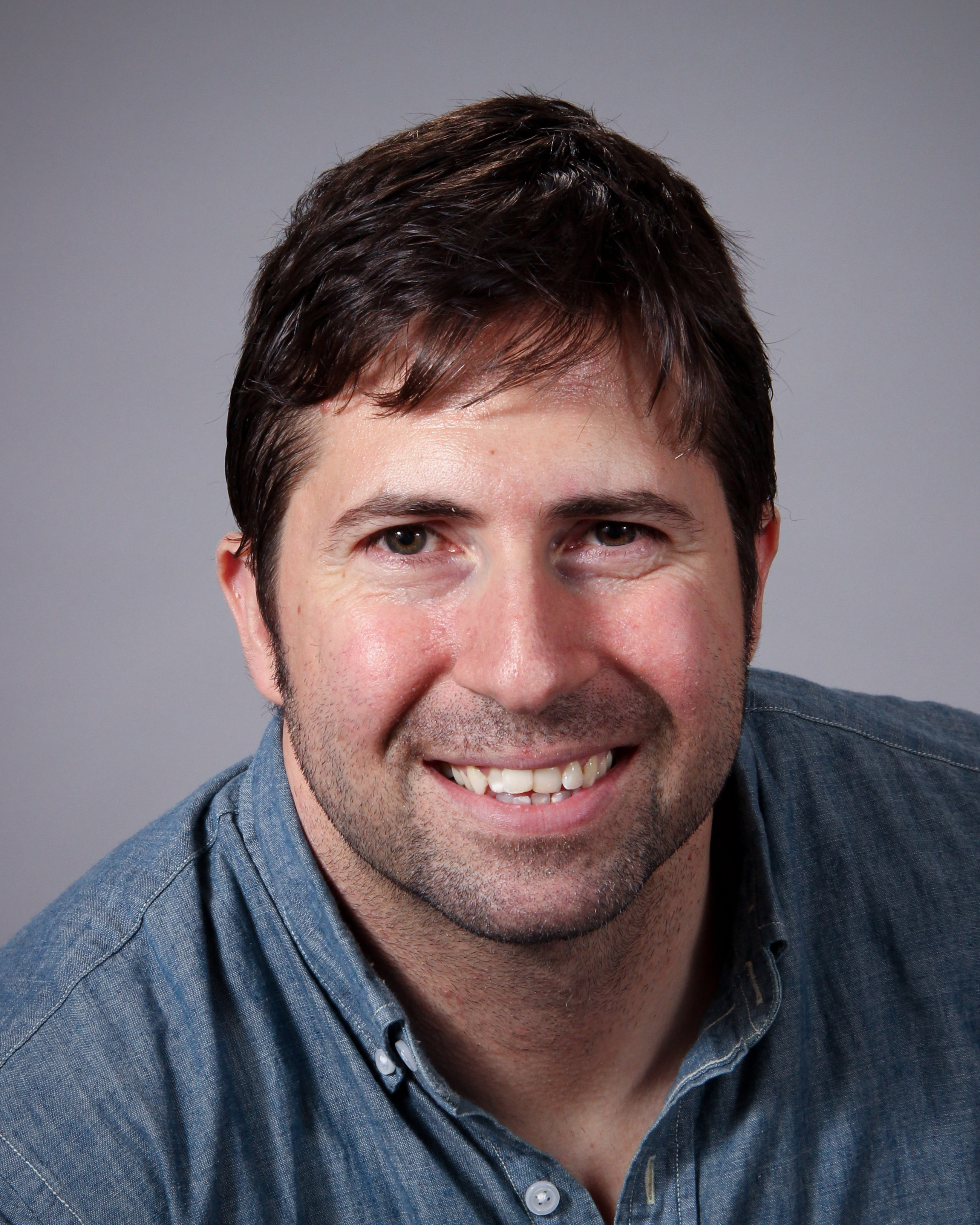}}]{Francesco Sorrentino}
 received a master’s degree
in Industrial Engineering in 2003 and a
Ph.D. in Control Engineering in 2007 both from the University
of Naples Federico II (Italy). His
expertise is in dynamical systems and controls,
with particular emphasis on nonlinear dynamics
and adaptive decentralized control. His work includes
studies on dynamics and control of complex
dynamical networks and hypernetworks,
adaptation in complex systems, sensor adaptive
networks, coordinated autonomous vehicles operating in a dynamically
changing environment, and identification of nonlinear systems. He is interested
in applying the theory of dynamical systems to model, analyze,
and control the dynamics of complex distributed energy systems such
as power networks and smart grids. Subjects of current investigation
are evolutionary game theory on networks (evolutionary graph theory),
the dynamics of large networks of coupled neurons, and the use of
adaptive techniques for dynamical identification of communication delays
between coupled mobile platforms. He has published more than 50
papers in international scientific peer reviewed journals.
\end{IEEEbiography}

\bibliographystyle{plain}
\newcommand{\noop}[1]{}

\end{document}